\def \A {\mathbf{A}}
\def \X {\mathbf{X}}
\def \x {\mathbf{x}}
\def \y {\mathbf{y}}
\def \s {\mathbf{s}}
\def \v {\mathbf{v}}
\def \z {\mathbf{z}}
\def \h {\mathbf{h}}
\def \l {\left}
\def \r {\right}
\def \w {\mathbf{w}}
\def \a {\mathbf{a}}
\def \q {\mathbf{q}}
\def \k {\mathbf{k}}
\newcommand{\sij}[2] {\sum_{#1}^{#2}}
\def \e {\mathbf{e}}
\def \l {\left}
\def \r {\right}
\newtheorem{myDef}{Definition}
\newtheorem{myLemma}{Lemma}
\newtheorem{myConj}{Conjecture}
\begin{document}

\title{Performance Limits with Additive Error Metrics in Noisy Multi-Measurement Vector Problems}
\author{Junan Zhu,~\IEEEmembership{Member,~IEEE} and
Dror~Baron,~\IEEEmembership{Senior Member,~IEEE}
\thanks{The work was supported in part by the
National Science Foundation under the Grants
CCF-1217749 and ECCS-1611112.}
\thanks{Junan Zhu is with Bloomberg L.P., New York, NY 10017, and Dror Baron is with the Department of Electrical and Computer Engineering, NC State University, Raleigh, NC 27695.
E-mail: \{jzhu9, barondror\}@ncsu.edu.}}
\maketitle

\begin{abstract}
Real-world applications such as magnetic resonance imaging with multiple coils, multi-user communication, and diffuse optical tomography often assume a linear model where several sparse signals sharing common sparse supports  are acquired by several measurement matrices and then contaminated by noise.
Multi-measurement vector (MMV) problems consider the estimation or reconstruction of such signals. In different applications, the estimation error that we want to minimize could be the mean squared error or other metrics such as the mean absolute error and the support set error. Seeing that minimizing different error metrics is useful in  MMV problems, we study information-theoretic performance limits for MMV signal estimation with arbitrary additive error metrics. We also propose a message passing algorithmic framework that achieves the optimal performance, and rigorously prove the optimality of our algorithm for a special case. We further conjecture the optimality of our algorithm for some general cases, and back it up through numerical examples. As an application of our MMV algorithm, we propose a novel setup for active user detection in multi-user communication and demonstrate the promise of our proposed setup.
\end{abstract}

{\em Keywords}:\
Active user detection, error metric, message passing, multi-measurement vector problem.

\section{Introduction}
Many systems in science and engineering can be approximated by a linear model,
where a signal $\x \in \mathbb{R}^N$ is recorded via a measurement matrix $\A\in\mathbb{R}^{M \times N}$, and then contaminated by a measurement channel,
\begin{equation}\label{eq:SMV}
\w = \A \x,\ y_m = \mathcal{Z}(w_m), \forall m\in \{1,\ldots,M\},
\end{equation}
where $y_m,m\in\{1,\ldots,M\}$, are the entries of the measurements $\y\in\mathbb{R}^M$, and the measurement channel $\mathcal{Z}(\cdot)$ is characterized by a probability density function (pdf), $f(y_m|w_m)$.
The goal is to estimate $\x$ from the measurements $\y$ given knowledge of $\A$
and a model for the measurement channel $f(y_m|w_m),\forall m$. We call such a system the {\em single measurement vector (SMV)} problem. 

In many applications, the signal acquisition systems are  distributed, where $J$ measurement matrices measure $J$ different signals individually. The key difference between such a system and $J$ individual SMV's, is that these $J$ signals are somewhat dependent.
An example of a model containing such dependencies is the multi-measurement vector (MMV) problem~\cite{chen2006trs,cotter2005ssl,Mishali08rembo,Berg09jrmm,LeeKimBreslerYe2011,LeeBreslerJunge2012,YeKimBresler2015}. 
The MMV problem
considers the estimation of a set of dependent signals, and has applications such as  magnetic resonance imaging
with multiple coils~\cite{JuYeKi07,JuSuNaKiYe09}, active user detection in multi-user communication~\cite{FletcherRanganGoyal2009,Boljanovic2017}, and diffuse optical tomography using multiple
illumination patterns~\cite{LeeKimBreslerYe2011}. In MMV, thanks to the dependencies among different signals, the number of sparse coefficients that can be successfully estimated
increases with the number of
measurements. This property was evaluated rigorously for noiseless
measurements using
$l_0$ minimization~\cite{DuarteWakinBaronSarvothamBaraniuk2013}, if the underlying signals share the same sparse supports. A non-rigorous replica analysis of MMV with measurement noise also shows the benefit of having more signal vectors~\cite{ZhuBaronKrzakala2017IEEE,ZhuDissertation2017}.

{\bf Related work:}
There are many estimation approaches for MMV problems. These include greedy algorithms such as SOMP~\cite{tropp2006ass,chen2006trs},
$l_1$ convex relaxation~\cite{malioutov2005ssr,tropp2006ass2}, and M-FOCUSS~\cite{cotter2005ssl}. REduce MMV and BOost (ReMBo) has
been shown to outperform conventional methods~\cite{Mishali08rembo}, and subspace methods have also
been used to solve MMV problems~\cite{LeeBreslerJunge2012,YeKimBresler2015}. However, these algorithms cannot handle the case of $J$ {\em different} measurement matrices.
Statistical approaches~\cite{ZinielSchniter2011} often achieve the oracle minimum mean squared error (MMSE). However, when running estimation algorithms for MMV problems, one might be interested in minimizing some other error. For example, if estimating the underlying signal is important, one could use the mean squared error (MSE) metric; when there might be outliers in the estimated signal, using the mean absolute error (MAE) metric might be more appropriate. Seeing that there is no prior work discussing the optimal performance with user-defined error metrics, we study the optimal performance with user-defined additive error metrics in MMV problems
where the signals share  common sparse supports, and each entry of the measurements is contaminated by parallel measurement channels (i.e., the channel in~\eqref{eq:SMV} satisfies $f(\y|\w)=\prod_{m=1}^M f(y_m|w_m)$). Note that a specific error metric, the MSE, has been studied in Zhu et al.~\cite{ZhuBaronKrzakala2017IEEE}, 
which
focuses on the MSE performance limits (i.e., the MMSE) of MMV signal estimation. In contrast, this work explores performance limits and designs an algorithm that can minimize {\em arbitrary additive error metrics} beyond MSE.

{\bf Contributions:}
This paper combines insights from Zhu et al.~\cite{ZhuBaronKrzakala2017IEEE} and Tan and coauthors~\cite{Tan2012SSP,Tan2014}, thus yielding a stronger understanding of the MMV problem, which could be extended in future work to other distributed signal acquisition settings, beyond MMV. To be more specific, we make several contributions in this paper. First, by extending Tan and coauthors~\cite{Tan2012SSP,Tan2014} we provide an algorithm based on a message passing (MP) framework~\cite{RanganGAMP2011ISIT,Krzakala2012probabilistic} that can be adapted to minimize the expected error for arbitrary additive error metrics. 
Our algorithm first runs MP until it converges or reaches some stopping criteria, and we then
denoise MP's output using a denoiser that minimizes the given additive error metric. Second, we 
prove rigorously that our algorithm is optimal in a specific SMV case,\footnote{In SMV, our algorithm closely resembles the one proposed by Tan and coauthors~\cite{Tan2012SSP,Tan2014}, with the difference being Tan and coauthors rely on relaxed belief propagation (relaxed BP, an MP algorithm)~\cite{Rangan2010CISS} and do not provide rigorous proofs for the optimality of their algorithm.} and further conjecture the optimality of our algorithm in MMV.
Third, as an example, we derive performance limits for MAE and mean weighted support set error (MWSE) by designing the corresponding optimal denoisers, based on the scalar channel noise variance (Section~\ref{sec:MPA}) derived from replica analysis (Appendix~\ref{app:inverMMSE})~\cite{ZhuBaronKrzakala2017IEEE}. Simulation results show the superiority and optimality of our algorithm. We note in passing that having more signal vectors in MMV helps reduce the estimation error. Finally, as an application of MMV and our algorithm, we propose a novel setup for active user detection in multi-user communications (details in Section~\ref{sec:app}) and demonstrate the promise of our proposed setup through simulation.

{\bf Organization:} The remainder of the paper is organized as follows.
We introduce our problem setting and MP algorithms in Section~\ref{sec:background}. Our  algorithmic framework, which can minimize arbitrary additive error metrics, is proposed in Section~\ref{sec:achievable}; we rigorously prove the optimality of our algorithm for an SMV case and conjecture the optimality of our algorithm for  MMV cases in Section~\ref{sec:converse}. For some example error metrics, we derive the corresponding optimal algorithms, together with the theoretical limits for these errors, in Section~\ref{sec:example}. Synthetic simulation results are discussed in Section~\ref{sec:numeric_synth}, followed by an application of our metric-optimal algorithm to a real-world problem in Section~\ref{sec:app}. We conclude in Section~\ref{sec:conclusion}.

{\bf Notations:} In this paper, bold capital letters represent matrices, bold lower case letters represent vectors, and normal font letters represent scalars. The $m$-th entry (scalar) of a vector $\z$ is denoted by $z_m$.
\section{Problem Setting and Background}
\label{sec:background}
\subsection{Problem setting}
{\bf Signal model}:
We consider an ensemble of $J$ signal vectors, $\x^{(j)}\in\mathbb{R}^N,\ j\in\{1,\ldots,J\}$, where $j$ is the index of the signal.
Consider a {\em super-symbol} $\x_n=[x_n^{(1)},\ldots,x_n^{(J)}],\ n\in\{1,\ldots,N\}$; all super-symbols in this paper are row vectors. The super symbol $\x_n$ follows a $J$-dimensional distribution,
\begin{equation}\label{eq:jsm}
f(\x_n)=\rho \phi(\x_n)+(1-\rho)\delta(\x_n),
\end{equation}
where $\rho\in (0,1)$ determines the percentage of non-zeros in the signal and is called the sparsity rate, $\phi(\x_n)$ is a $J$-dimensional pdf, and 
   $\delta(\x_n)=\left\{
                \begin{array}{ll}
                 1, & \x_n = \textbf{0},\\
                 0, & \text{else}.
                \end{array}
      		\right.\\
$

\begin{myDef}[Joint sparsity with common supports]\label{def:jointly_sparse}
{\em Ensembles of signals are called jointly sparse signals with common sparse supports when they obey~\eqref{eq:jsm}.}
\end{myDef}

Note that there are other types of joint sparsity~\cite{BaronDCStech} that fit into the MMV framework. For example, an MMV model with signal vectors that have slowly changing supports is discussed in Ziniel and Schniter~\cite{ZinielSchniter2013MMV}. Since this paper only focuses on the MMV problem with signals sharing common sparse supports~\eqref{eq:jsm}, we refer to joint sparsity with common sparse supports as joint sparsity for brevity.

{\bf Measurement models}: 
Each signal $\x^{(j)}$ is measured by
a measurement matrix $\A^{(j)}\in\mathbb{R}^{M\times N}$ before being corrupted by a random measurement channel,
\begin{equation}\label{eq:linearMixing}
\begin{split}
\w^{(j)}&=\A^{(j)}\x^{(j)},\ y_m^{(j)}=\mathcal{Z}(w_m^{(j)}),\\
 &\quad m\in\{1,\ldots,M\},\ j\in\{1,\ldots,J\},
\end{split}
\end{equation}
where $y_m^{(j)}, m\in \{1,\ldots,M\}$ are the entries of the measurements $\y^{(j)}$, and the measurement channel $\mathcal{Z}(\cdot)$ is characterized by the pdf $f(y_m^{(j)}|w_m^{(j)})$.
In this paper, we only focus on independent and identically distributed   (i.i.d.) parallel measurement channels, i.e., the pdf's $f(y_m^{(j)}|w_m^{(j)}), \forall m,j$, are identical and there is no cross-talk among different channels; our proposed algorithm is readily extended to parallel channels with different $f(y_m^{(j)}|w_m^{(j)}), \forall m,j$.
When the number of signal vectors $J=1$, we call this MMV model \eqref{eq:linearMixing} an SMV problem~\eqref{eq:SMV}.

\begin{myDef}[Large system limit~\cite{GuoWang2008}]\label{def:largeSystemLimit}
{\em The signal length $N$ scales to infinity, and the
number of measurements $M=M(N)$ depends on $N$ and also scales to infinity, where
the ratio approaches a positive constant $R$,}
\begin{equation*}
\lim_{N\rightarrow\infty} \frac{M(N)}{N} = R>0.
\end{equation*}
\end{myDef}
We call $R$ the measurement rate.

For MMV problems, we are given the matrices $\A^{(j)}$ and measurements $\y^{(j)},\ \forall j$, as well as knowledge of the measurement channel~\eqref{eq:linearMixing}. Our task is to estimate the underlying signal vectors $\x^{(j)},\ \forall j$. Suppose that the estimate is $\widehat{\x}^{(j)}$. 
Define $\X=[\x^{(1)},\cdots,\x^{(J)}]$ and $\widehat{\X}=[\widehat{\x}^{(1)},\cdots,\widehat{\x}^{(J)}]$. Therefore, $\X=[\x_1^T,\cdots,\x_N^T]^T$, where $\{\cdot\}^T$ denotes transpose. The estimation quality is quantified by a user-defined error metric $D_{\text{UD}}(\X,\widehat{\X})$, where the subscript UD denotes ``user-defined." We define this {\em additive error metric $D_{\text{UD}}(\cdot,\cdot)$ as}
\begin{equation*}
D_{\text{UD}}(\X, \widehat{\X})=\sij{n=1}{N}d_{\text{UD}}(\x_n,\widehat{\x}_n),
\end{equation*}
where $d_{\text{UD}}(\cdot,\cdot): \mathbb{R}^J\times \mathbb{R}^J \rightarrow \mathbb{R}$ is an arbitrary user-defined error metric on each super-symbol.
The smaller the $D_{\text{UD}}(\cdot,\cdot)$ is, the better the estimation is.

\subsection{Message passing algorithms}\label{sec:MPA}
Message passing (MP) algorithms consider a factor graph~\cite{RanganGAMP2011ISIT,Krzakala2012probabilistic,BarbierKrzakala2017IT}, which expresses the relation between the signals $\x$ and measurements $\y$. We begin by discussing the factor graph for SMV, followed by that of MMV.

\begin{figure}[t]
\centering
\includegraphics[width=8cm]{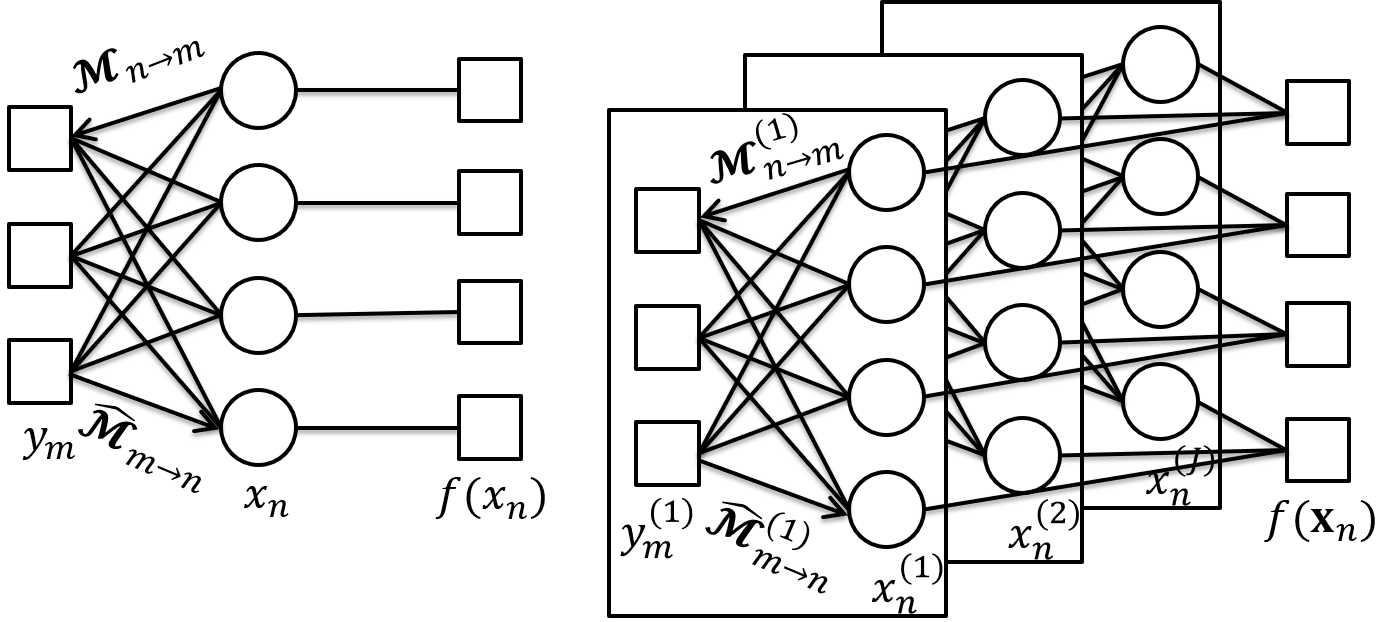}
\caption{Factor graph for SMV (left) and MMV (right).}
\label{fig:factorGraph}
\end{figure}

{\bf Factor graph for SMV:}
The left panel of Fig.~\ref{fig:factorGraph} illustrates the factor graph concept for an SMV problem~\eqref{eq:SMV} with i.i.d. entries in the signal $\x$. The round circles are the variable nodes (representing the distribution of the signal), and the squares denote the factor nodes (representing the measurement channel). The variables $x_n,\ \forall n$, are driven by each factor node $f(x_n)$ individually, because the signal has i.i.d. entries $x_n$. 
There are two types of messages passed in the factor graph shown on the left panel of Fig.~\ref{fig:factorGraph}: the 
message passed by the variable node $x_n$ to the factor node $y_m$, $\mathcal{M}_{n\rightarrow m}(x_n)$, and the message passed by the factor node $y_m$ to the variable node $x_n$, $\widehat{\mathcal{M}}_{m\rightarrow n}(x_n)$. According to the literature on MP algorithms~\cite{RanganGAMP2011ISIT,Krzakala2012probabilistic,BarbierKrzakala2017IT}, we have the following relation:
\begin{equation*}
\begin{split}
\mathcal{M}_{n\rightarrow m}(x_n) &=\frac{1}{Z_{n\rightarrow m}}f(x_n) \prod_{\widehat{m} \neq m} \widehat{\mathcal{M}}_{\widehat{m}\rightarrow n}(x_n), \\
\widehat{\mathcal{M}}_{m\rightarrow n}(x_n) &=\frac{1}{Z_{m\rightarrow n}} \int f(y_m | \x) \prod_{\widehat{n}\neq n} \mathcal{M}_{\widehat{n}\rightarrow m}(x_{\widehat{n}})\prod_{\widehat{n}\neq n}dx_{\widehat{n}},
\end{split}
\end{equation*}
where we use a single integral sign to denote a multi-dimensional integration for brevity, and $Z_{n\rightarrow m}$ and $Z_{m\rightarrow n}$ are normalization factors.
The aim of this paper is not the detailed derivation of MP algorithms. Instead, the key property of MP utilized in this paper is that MP converts~\eqref{eq:SMV} into the following equivalent scalar channel,
\begin{equation*}
\q=\x+\v,
\end{equation*}
where $\q$ is the noisy {\em pseudo data}, and $\v$ is the equivalent scalar channel additive white Gaussian noise (AWGN) whose variance $\Delta_v$ can be approximated. After obtaining $\q$ and $\Delta_v$, each variable node $x_n$ updates the estimate $\widehat{x}_n$ by denoising $q_n$.
If the signal entries are i.i.d. and the matrix is either i.i.d. or sparse and locally tree-like, then MP algorithms yield a density function $f(x_n|q_n)$ that is statistically equivalent to $f(x_n|\y)$~\cite{RanganGAMP2011ISIT}.

{\bf Factor graph for MMV:}
An MMV problem with jointly sparse signals can be expressed as the factor graph shown in the right panel of Fig.~\ref{fig:factorGraph}. We can see that the MP in each channel is similar to an SMV problem. The only difference is that the $J$ variable nodes $x_n^{(j)},\ j\in\{1,\ldots,J\}$, for fixed $n$ are driven by one factor node $f(\x_n)$. By grouping the entries from different signal vectors together into super-symbols as in~\eqref{eq:jsm},  we have i.i.d. super-symbols. 
The noisy super-symbol pseudo data $\q_n=[q_n^{(1)},\ldots,q_n^{(J)}]$ is denoised, in order to update the estimate for $\x_n=[x_n^{(1)},\ldots,x_n^{(J)}]$. 

\section{Main Results}\label{sec:main}
We first present our metric-optimal algorithm in Section~\ref{sec:achievable} and then in Section~\ref{sec:converse} we rigorously prove that our metric-optimal algorithm is optimal in the SMV case under certain conditions.\footnote{Our proof is based on the approximate message passing algorithm~\cite{DMM2009}, while Tan and coauthors~\cite{Tan2012SSP,Tan2014} rely on relaxed BP and do not provide rigorous proofs.} Based on our proof for the SMV case, we conjecture
that the proposed algorithm is optimal for arbitrary additive error metrics in the MMV case.
\subsection{Achievable part: Metric-optimal estimation algorithm}\label{sec:achievable}
The metric-optimal algorithm consists of two parts, as illustrated in Algorithm~\ref{algo:metric_opt_MMV}. We first run an MP algorithm (Algorithm~\ref{algo:AMP_MMV} provides an implementation of an MP algorithm) to get the noisy pseudo data $\q^{(j)}, \forall j$, and the noise variance $\Delta_v$ (details below). Next, we  denoise $\q^{(j)}, \forall j$, using an optimal denoiser tailored to minimize the given error metric. The following discusses both parts in detail.

\begin{algorithm}[t]
\caption{Metric-optimal algorithm for MMV}
\label{algo:metric_opt_MMV}
\begin{algorithmic}[1]
\\{\bf Inputs:} Measurements $\y^{(j)}$ and matrices $\A^{(j)}, \forall j$
\\{\bf Part 1 (Algorithm~\ref{algo:AMP_MMV}):} Obtain pseudo data $\q^{(j)},\ \forall j$, and scalar channel noise variance $\Delta_v$ from MP($\y^{(j)}, \A^{(j)}, \forall j$)
\\{\bf Part 2 (examples in Section~\ref{sec:example}):} 
Obtain optimal estimate $\widetilde{\x}^{(j)}$ from denoiser using $\q^{(j)},\Delta_v, \forall j$
\\{\bf Outputs:} $\widetilde{\x}^{(j)},\ \forall n$
\end{algorithmic}
\end{algorithm}

\begin{algorithm}[t]
\caption{GAMP for MMV}
\label{algo:AMP_MMV}
\begin{algorithmic}[1]
\\{\bf Inputs:} Maximum number of iterations $t_{\text{max}}$, threshold $\epsilon$, sparsity rate $\rho$, noise variance $\Delta_z$, measurements $\y^{(j)}$, and measurement matrices $\A^{(j)}, \forall j$
\\{\bf Initialize:} $t=1,\delta=\infty,\k^{(j)}=\y^{(j)},\Theta_m^{(j)}=0,s^{(j)}_n=\rho\Delta_z,\widehat{x}_n^{(j)}=0, h_m^{(j)}=0, \forall m,n,j$
\While{$t<t_{\text{max}}$ and $\delta>\epsilon$}
\For{$j\leftarrow 1$ to $J$} \label{line:beginForLoop}
\\\quad\quad\quad$\boldsymbol{\Theta}^{(j)}=(\A^{(j)})^2 \s^{(j)}$\label{line:Theta}
\\\quad\quad\quad$\k^{(j)}=\A^{(j)} \widehat{\x}^{(j)}-\text{diag}(\boldsymbol{\Theta}^{(j)}) \h^{(j)}$
\For{$m\leftarrow 1$ to $M$}
\\\quad\quad\quad\quad$h_m^{(j)}=g_{\text{out}}\l(k_m^{(j)},y_m^{(j)},\Theta_m^{(j)}\r)$\label{line:g_out} 
\\\quad\quad\quad\quad $r_m^{(j)} = -\frac{\partial}{\partial k_m^{(j)}} g_{\text{out}}\l(k_m^{(j)},y_m^{(j)},\Theta_m^{(j)}\r)$\label{line:deriv_g_out}
\EndFor
\\\quad\quad\quad // Scalar channel noise variance
\\\quad\quad\quad$\Delta_v^{(j)}=\l\{\frac{1}{N} \mathbf{1}^T \l[(\A^{(j)})^T\r]^2 \mathbf{r}^{(j)}\r\}^{-1}$ \label{line:scalarNoiseVar}
\\\quad\quad\quad$\q^{(j)}=\widehat{\x}^{(j)}+\Delta_v^{(j)} (\A^{(j)})^T \h^{(j)}$ // Pseudo data
\\\quad\quad\quad$\widehat{\a}^{(j)}=\widehat{\x}^{(j)}$ // Save current estimate
\EndFor \label{line:endForLoop}
\\\quad\ $\Delta_v=\sum_{j=1}^J \Delta_v^{(j)}$\label{line:mean_delta}
\For{$n\leftarrow 1$ to $N$}
\\\quad\quad\quad$\widehat{\x}_n=f_{a_n}(\Delta_v,\q_n)$ // Estimate\label{line:mean}
\\\quad\quad\quad$\s_n=[s_n^{(1)},\ldots,s_n^{(J)}]=f_{v_n}(\Delta_v,\q_n)$ // Variance\label{line:var}
\EndFor
\\\quad\ \ $t=t+1$ // Increment iteration index
\\\quad\ \ $\delta=\frac{1}{NJ}\sum_{n=1}^N\sum_{j=1}^J\l(\widehat{x}^{(j)}_n-\widehat{a}^{(j)}_n\r)^2$ // Change
\EndWhile
\\{\bf Outputs:} Estimate $\widehat{\x}^{(j)}$, pseudo data $\q^{(j)},\ \forall j$, and scalar channel noise variance $\Delta_v$
\end{algorithmic}
\end{algorithm}

{\bf MP algorithm:}
For the first part, we modify the generalized approximate message passing (GAMP) algorithm~\cite{RanganGAMP2011ISIT}, which is an implementation of MP, and list the pseudo code in Algorithm~\ref{algo:AMP_MMV}. The notation diag$(\boldsymbol{\Theta}^{(j)})$ 
denotes a diagonal matrix whose entries along the diagonal are $\boldsymbol{\Theta^{(j)}}$,
and the power-of-two in Lines~\ref{line:Theta} and~\ref{line:scalarNoiseVar} is applied element-wise. The function $g_{\text{out}}$ in Lines~\ref{line:g_out} and~\ref{line:deriv_g_out} is given by
\begin{equation}\label{eq:g_out}
g_{\text{out}}\l(k,y,\Theta\r)=\frac{1}{\Theta}(\mathbb{E}[w|k,y,\Theta]-k),
\end{equation}
where we omit the subscripts and super-scripts for brevity, and the expectation is taken over the pdf,
\begin{equation}\label{eq:prob_w}
f(w|k,y,\Theta)\propto f(y|w)\text{exp}\l[-\frac{(w-k)^2}{2\Theta}\r].
\end{equation}

For the special case of  AWGN channels,
\begin{equation}\label{eq:AWGN}
y=w+z,
\end{equation} 
where $z\sim \mathcal{N}(0,\Delta_z)$, we obtain
$g_{\text{out}}(k,y,\Theta)=\frac{y-k}{\Delta_z+\Theta}$~\cite{RanganGAMP2011ISIT}. In Appendix~\ref{app:logit}, we also briefly present the derivation for an i.i.d. parallel logistic channel,
\begin{equation}\label{eq:logit}
f(y|w)=\delta(y-1)\frac{1}{1+\text{exp}(-aw)}+\delta(y)\frac{\text{exp}(-aw)}{1+\text{exp}(-aw)},
\end{equation}
where $a$ is a scaling factor.\footnote{Byrne and Schniter~\cite{ByrneSchniter2015ArXiv} describe without detail how to derive  $g_{\text{out}}(\cdot)$~\eqref{eq:g_out} for i.i.d. parallel logistic channels~\eqref{eq:logit}; we present a detailed derivation for completeness in Appendix~\ref{app:logit}, and do not claim it as a contribution.} 

For the special case of i.i.d. {\em joint Bernoulli-Gaussian signals} where $\phi(\x_n)\sim \mathcal{N}(0,\mathbb{I})$ in~\eqref{eq:jsm} and $\mathbb{I}$ is an identity matrix, $f_{a_n}$ and $f_{v_n}$ in Lines~\ref{line:mean} and~\ref{line:var} are given below,
\begin{equation}\label{eq:denoiser}
f_{a_n}(\Delta_v,\q_n)=\frac{\rho}{C(\Delta_v+1)}\q_n,
\end{equation}
\begin{equation*}
f_{v_n}(\Delta_v,\q_n)\!=\!-[f_{a_n}(\Delta_v,\q_n)]^2+\frac{\rho}{C(\Delta_v+1)}\!\l[\frac{\q_n^2}{\Delta_v+1}\!+\!\Delta_v\r],
\end{equation*}
where $\q_n^2=\l[\l(q_n^{(1)}\r)^2,\ldots,\l(q_n^{(J)}\r)^2\r]$ and
\begin{equation*}
C=\rho+(1-\rho)\l(1+\frac{1}{\Delta_v}\r)^{\frac{J}{2}}\exp\l[-\frac{\q_n\q_n^T}{2\Delta_v(\Delta_v+1)}\r].
\end{equation*}
Notice that in Line~\ref{line:scalarNoiseVar} we take the mean of a vector to obtain a scalar $\Delta_v^{(j)}$, which is the average  of the variances for the estimates of signal entries $x_n^{(j)}$. This is because the super-symbols $\x_n, \forall n$, of the signals are i.i.d and the $J$ measurement channels are i.i.d. For the same reason, $\Delta_v^{(j)},\ j\in\{1,\ldots,J\}$, should be close to each other; hence, Line~\ref{line:mean_delta}. Note that Algorithm~\ref{algo:AMP_MMV} assumes that the entries of $\A^{(j)}$ scale with $\frac{1}{\sqrt{N}}$, and is a more generic form of an algorithm from our prior work with Krzakala~\cite{ZhuBaronKrzakala2017IEEE}.

{\bf Metric-optimal denoiser:}
The second part of our metric-optimal algorithm takes as inputs the noisy pseudo data $\q_n=\x_n+\v_n$ and the estimated variance of $\v_n,\ \Delta_v$, from the MP algorithm. Using Bayes' rule, we can derive the posterior $f(\x_n|\q_n)$ and use $f(\x_n|\q_n)$ to formulate the optimal estimator in the sense of the user-defined error metric.
The optimal estimate is
\begin{equation}\label{eq:estimator}
\widetilde{\x}_n=\arg\min_{\widehat{\x}_n} \int d_{\text{UD}}(\x_n,\widehat{\x}_n) f(\x_n|\q_n) d\x_n.
\end{equation}

\subsection{Converse part: The optimal estimate}\label{sec:converse}
The reason why~\eqref{eq:estimator} is optimal is based on the insight from Rangan~\cite{RanganGAMP2011ISIT}  that in SMV the density function $f(x_n|q_n)$  converges to the posterior $f(x_n|\y)$. A rigorous proof for a certain SMV case is provided below, followed by our conjecture that~\eqref{eq:estimator} is optimal in the MMV case.

\begin{myLemma}[Optimality of Algorithm~\ref{algo:metric_opt_MMV} in SMV]\label{lemma:optSMV}
Consider an SMV~\eqref{eq:SMV} in the large system limit (Definition~\ref{def:largeSystemLimit}) with an AWGN measurement channel, $f(y_m|w_m)=\frac{1}{\sqrt{2\pi\sigma_Z^2}}\exp\l[\frac{(y_m-w_m)^2}{2\sigma_Z^2}\r], \forall m$. The estimate  $\widetilde{\x}_n$~\eqref{eq:estimator} is optimal in the sense that
\begin{equation}\label{eq:optSMV}
\lim_{N\rightarrow\infty} \frac{1}{N}\sum_{n=1}^N d_{\text{UD}}(\widetilde{x}_n,x_n) = \text{MUDE},
\end{equation} 
where MUDE denotes the minimum user-defined error, if all the conditions below hold.
\begin{enumerate}
\item The entries of the measurement matrix are i.i.d. Gaussian, $A_{mn}\sim \mathcal{N}(0,\frac{1}{N})$,
\item the signal entries are i.i.d. with bounded fourth moment $\mathbb{E}[X^4]<B$, where $B$ is some constant,
\item  the free energy given by replica analysis has one fixed point~\cite{ZhuBaronCISS2013,ZhuBaronKrzakala2017IEEE,Krzakala2012probabilistic},\footnote{Free energy is a term brought from statistical physics and is used to describe the interaction between the signals and the measurement matrices in linear models~\cite{ZhuBaronCISS2013,ZhuBaronKrzakala2017IEEE,Krzakala2012probabilistic}. When the free energy has two fixed points, (G)AMP is not optimal with i.i.d. Gaussian matrices, and neither is Algorithm~\ref{algo:metric_opt_MMV}. We refer interested readers to the literature for detailed discussions~\cite{ZhuBaronCISS2013,ZhuBaronKrzakala2017IEEE,Krzakala2012probabilistic}.} 
\item the user-defined error metric $d_{\text{UD}}(\cdot,\cdot)$ is pseudo-Lipschitz~\cite{Bayati2011},\footnote{Pseudo-Lipschitz is a concept discussed in Bayati and Montanari~\cite{Bayati2011}: For $k\geq 1$, we say a function $\phi: \mathbb{R}^m \rightarrow \mathbb{R}$ is
{\em pseudo-Lipschitz} of order $k$ if there exists a constant $L>0$ such that for all $\x,\y\in \mathbb{R}^m$: $|\phi(\x)-\phi(\y)| \leq L(1+\|\x\|^{k-1}+\|\y\|^{k-1})\|\x-\y\|$.}
\item the optimal estimator~\eqref{eq:estimator} as a function of $q_n,\ \widetilde{x}_n(q_n): \mathbb{R}\rightarrow \mathbb{R}$
, is Lipschitz continuous, and
\item Part~1 converges before entering Part~2 in Algorithm~\ref{algo:metric_opt_MMV}.
\end{enumerate}
\end{myLemma}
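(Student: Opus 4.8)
The plan is to reduce the normalized error of Algorithm~\ref{algo:metric_opt_MMV} to an expectation over a decoupled scalar channel and then certify that reduction with the rigorous state-evolution analysis of approximate message passing in Bayati and Montanari~\cite{Bayati2011}. First I would observe that under the AWGN channel $g_{\text{out}}(k,y,\Theta)=(y-k)/(\sigma_Z^2+\Theta)$ is affine, so with $J=1$ the GAMP recursion in Algorithm~\ref{algo:AMP_MMV} collapses to the standard AMP iteration to which~\cite{Bayati2011} applies; Condition~1 supplies the i.i.d.\ Gaussian matrix required there, Condition~2 supplies the bounded-moment hypothesis, and Condition~6 ensures that Part~1 has reached the state-evolution fixed point before Part~2 denoises. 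The conclusion of this step is that the output of Part~1 is the scalar channel $q_n=x_n+v_n$, with the empirical joint law of $(x_n,q_n)$ converging to that of $(X,X+V)$, where $X\sim f$ is the prior, $V\sim\mathcal{N}(0,\Delta_v)$ is independent, and $\Delta_v$ is the fixed point of state evolution.

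Next I would write the quantity of interest as a test function of the AMP iterates. Setting $\psi(x,q):=d_{\text{UD}}\!\l(\widetilde{x}(q),x\r)$ with $\widetilde{x}(\cdot)$ the optimal denoiser from~\eqref{eq:estimator}, I would argue that $\psi$ is pseudo-Lipschitz: Condition~5 makes $\widetilde{x}(\cdot)$ Lipschitz and hence of at most linear growth, Condition~4 makes $d_{\text{UD}}$ pseudo-Lipschitz, and the composition of a pseudo-Lipschitz function with a Lipschitz map is again pseudo-Lipschitz. The master theorem of~\cite{Bayati2011} then gives
\begin{equation*}
\lim_{N\rightarrow\infty}\frac{1}{N}\sum_{n=1}^N d_{\text{UD}}\!\l(\widetilde{x}_n,x_n\r)=\mathbb{E}\!\l[d_{\text{UD}}\!\l(\widetilde{x}(X+V),X\r)\r]
\end{equation*}
almost surely, which is the achievability half of~\eqref{eq:optSMV}.

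It then remains to identify the right-hand side with the fundamental limit MUDE. By~\eqref{eq:estimator}, for each $q$ the value $\widetilde{x}(q)$ minimizes $\int d_{\text{UD}}(x,\widehat{x})f(x|q)\,dx$, so $\mathbb{E}[d_{\text{UD}}(\widetilde{x}(X+V),X)]=\mathbb{E}_Q\big[\min_{\widehat{x}}\mathbb{E}[d_{\text{UD}}(\widehat{x},X)\mid Q]\big]$ is the least error achievable by any measurable function of $q$. Since $d_{\text{UD}}$ is additive, the Bayes-optimal estimator for the original problem decouples across coordinates and its per-symbol error is governed by the marginal posterior $f(x_n\mid\y)$. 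Here I would invoke Rangan's decoupling result~\cite{RanganGAMP2011ISIT}, which identifies $f(x_n\mid q_n)$ with $f(x_n\mid\y)$ in the large system limit, together with Condition~3 (a single fixed point of the replica free energy~\cite{ZhuBaronKrzakala2017IEEE,Krzakala2012probabilistic}), which is exactly what forces $\Delta_v$ to equal the information-theoretically optimal decoupled noise variance so that the two posteriors coincide. Hence the scalar-channel minimum equals the per-symbol MUDE, which closes~\eqref{eq:optSMV}.

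The hard part will be this last step: certifying that the achieved error is the genuine information-theoretic limit rather than a mere stationary point of the algorithm. State evolution only guarantees convergence to \emph{some} fixed point of the recursion, and the identification of that fixed point with the Bayes-optimal, minimum-free-energy one is precisely where Condition~3 is indispensable and where the argument must lean on the decoupling statement of~\cite{RanganGAMP2011ISIT}; when the free energy has multiple fixed points this identification fails and MUDE is generally not attained. By contrast, the reduction of GAMP to AMP and the pseudo-Lipschitz bookkeeping of $\psi$ should be routine.
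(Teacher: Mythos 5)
Your architecture matches the paper's: Part~1 of the algorithm is certified by the state-evolution master theorem of Bayati and Montanari~\cite{Bayati2011} applied to the pseudo-Lipschitz test function $\psi(x,q)=d_{\text{UD}}(\widetilde{x}(q),x)$ (the paper cites their Theorem~1 without spelling out the composition argument you give, which is a worthwhile addition), and the remaining work is to identify $\mathbb{E}[d_{\text{UD}}(\widetilde{x}(X+V),X)]$, with $V\sim\mathcal{N}(0,\Delta_v)$, as the MUDE. You correctly single this identification out as the hard step.

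The gap is in the tool you propose for that step. You invoke Rangan's decoupling claim from~\cite{RanganGAMP2011ISIT} that $f(x_n|q_n)$ converges to $f(x_n|\y)$; for dense i.i.d.\ Gaussian matrices that statement is a replica-based heuristic, not a theorem, so resting the converse on it leaves the proof non-rigorous --- which is precisely the defect this lemma is meant to repair relative to the earlier work of Tan and coauthors. The paper instead closes the loop with two rigorous facts: (i) at convergence (Condition~6) and under a unique free-energy fixed point (Condition~3), the state-evolution quantity $\Delta_v$ coincides with the scalar-channel noise variance associated with the replica-predicted MMSE~\cite{Bayati2011,RushVenkataramanan2016}; and (ii) the replica-symmetric MMSE prediction is proved exact for i.i.d.\ Gaussian matrices by Reeves and Pfister~\cite{ReevesPfister2016}. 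Together these certify that the decoupled channel with variance $\Delta_v$ is the information-theoretically correct one, so its per-symbol Bayes risk under $d_{\text{UD}}$ is the MUDE. Replace your appeal to~\cite{RanganGAMP2011ISIT} with this chain and your argument coincides with the paper's.
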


\begin{proof}
According to Theorem~1 in Bayati and Montanari~\cite{Bayati2011}, we know that 
$\lim_{N\rightarrow\infty} \frac{1}{N}\sum_{n=1}^N d_{\text{UD}}(\widetilde{x}_n(q_n),x_n) = \mathbb{E}\l[d_{\text{UD}}(\widetilde{x}_{n}(X+\Delta_v Z),X)\r]$, where $X$ is a random variable following the same distribution as the signal entries, $Z\sim \mathcal{N}(0,1)$, and $\Delta_v$ is given in Line~\ref{line:mean_delta} of Algorithm~\ref{algo:AMP_MMV}. The question boils down to showing that $\text{MUDE} = \mathbb{E}\l[d_{\text{UD}}(\widetilde{x}_{n}(X+\Delta_v Z),X)\r]$.
In fact, when Algorithm~\ref{algo:AMP_MMV} converges, $\Delta_v$ corresponds to the scalar channel noise variance associated with the MMSE given by replica analysis~\cite{Bayati2011,RushVenkataramanan2016}. In addition, the MMSE provided by replica analysis is proved  to be exact under the conditions asserted in Lemma~\ref{lemma:optSMV}~\cite{ReevesPfister2016}. Therefore, $\mathbb{E}\l[d_{\text{UD}}(\widetilde{x}_{n}(X+\Delta_v Z),X)\r]$ is indeed the MUDE. Hence, we proved~\eqref{eq:optSMV}, and the estimator~\eqref{eq:estimator} is optimal.
\end{proof}

\textbf{Remark 1:} The optimality of the metric-optimal estimator $\widetilde{\x}_n$~\eqref{eq:estimator} is stated in the sense that the ensemble mean user-defined error converges almost surely to the MUDE. It is possible that there exist other estimators that achieve the  MUDE. 

\textbf{Remark 2:} The proof is made possible by linking three rigorous proofs~\cite{ReevesPfister2016,Bayati2011,RushVenkataramanan2016} from the prior art. That said, numerical examples in Section~\ref{sec:numeric_synth} demonstrate that Algorithm~\ref{algo:metric_opt_MMV} yields promising results even if the conditions required by Lemma~\ref{lemma:optSMV} are not met.

After proving the optimality of our metric-optimal algorithm in the SMV scenario with certain conditions, we state the following conjecture.
\begin{myConj}\label{conj:2}
In the large system limit, for the MMV model \eqref{eq:linearMixing} with the signal in~\eqref{eq:jsm} and the user-defined additive error metric $d_{\text{UD}}(\x_n,\widehat{\x}_n)$, the optimal estimate of the signal vectors is
\begin{equation*}
\widetilde{\x}_n=\arg\min_{\widehat{\x}_n} \mathbb{E}[d_{\text{UD}}(\x_n,\widehat{\x}_n)|\q_n].
\end{equation*}
\end{myConj}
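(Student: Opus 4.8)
The plan is to mirror the proof of Lemma~\ref{lemma:optSMV}, replacing each of its rigorous ingredients by a per-super-symbol (vector-valued) counterpart. The target statement, paralleling~\eqref{eq:optSMV}, is that at a fixed point of Algorithm~\ref{algo:AMP_MMV} the denoiser $\widetilde{\x}_n=\arg\min_{\widehat{\x}_n}\mathbb{E}[d_{\text{UD}}(\x_n,\widehat{\x}_n)|\q_n]$ attains the minimum user-defined error for MMV, i.e. $\lim_{N\to\infty}\frac{1}{N}\sum_{n=1}^N d_{\text{UD}}(\widetilde{\x}_n,\x_n)=\text{MUDE}$. Since a conditional minimizer is by construction optimal for every realization of $\q_n$, the converse direction (that no estimator can do strictly better) is automatic once the algorithm's posterior $f(\x_n|\q_n)$ is shown to coincide with the true posterior $f(\x_n|\{\y^{(j)}\}_{j=1}^J)$ at convergence; thus essentially all of the content lies in the achievability direction.

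First I would establish a state-evolution (SE) characterization for the multi-system GAMP of Algorithm~\ref{algo:AMP_MMV}. Because the super-symbols $\x_n$ are i.i.d.\ $J$-dimensional vectors while the $J$ matrices $\A^{(j)}$ act independently, the scalar channel $q=x+v$ of the SMV argument is replaced by a vector channel $\q_n=\x_n+\v_n$, with $\v_n$ zero-mean Gaussian whose $J\times J$ covariance $\boldsymbol{\Sigma}_v$ is the SE fixed point; for i.i.d.\ parallel channels I expect $\boldsymbol{\Sigma}_v$ to be asymptotically diagonal, consistent with the scalar $\Delta_v$ assembled in Line~\ref{line:mean_delta}. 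The tool needed here is a generalization of Theorem~1 of Bayati and Montanari~\cite{Bayati2011} to a non-separable, vector-valued denoiser acting on super-symbols and to several coupled Gaussian maps, yielding $\frac{1}{N}\sum_n \psi(\widetilde{\x}(\q_n),\x_n)\to \mathbb{E}[\psi(\widetilde{\x}(\x+\v),\x)]$ almost surely for every pseudo-Lipschitz $\psi:\mathbb{R}^{2J}\to\mathbb{R}$, where $\x\sim\eqref{eq:jsm}$ and $\v\sim\mathcal{N}(\mathbf{0},\boldsymbol{\Sigma}_v)$. Taking $\psi=d_{\text{UD}}$ and invoking the (vector) Lipschitz continuity of $\widetilde{\x}(\cdot)$ then expresses the limiting error as a single-letter expectation over the vector channel.

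Second, I would identify this SE fixed point with the MMV-MMSE predicted by the replica analysis of Zhu, Baron and Krzakala~\cite{ZhuBaronKrzakala2017IEEE}, under the single-fixed-point hypothesis so that the identification is unambiguous. The crucial and hardest step is then the MMV analogue of Reeves and Pfister~\cite{ReevesPfister2016}: a rigorous single-letter formula proving the replica MMSE is exact. Here the SMV proof cannot be quoted verbatim, because the prior couples the $J$ coordinates of each super-symbol through the shared support while $J$ distinct matrices $\A^{(j)}$ enter the likelihood, giving a coupled system of generalized linear models rather than a single one. I would attempt this through the adaptive-interpolation method, proving concentration of the now matrix-valued overlap and replica symmetry of the free energy. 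This is precisely where I expect the proof to stall: establishing both the exact vector single-letter characterization and the matching rigorous vector SE in this multi-matrix, correlated-prior regime is, to my knowledge, open, which is why the statement is posed as a conjecture rather than a lemma. Once those two rigorous pieces are secured, the conclusion follows exactly as in Lemma~\ref{lemma:optSMV}: the fixed-point covariance makes $f(\x_n|\q_n)$ agree with the true posterior, the conditional minimizer achieves MUDE, and optimality follows.
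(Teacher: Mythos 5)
Your plan coincides with the paper's own account: the authors explicitly reduce a rigorous proof of Conjecture~\ref{conj:2} to ({\em i}) exactness of the replica MMSE for the MMV case, ({\em ii}) the correspondence of the scalar channel noise variance $\Delta_v$ with that MMSE, and ({\em iii}) a vector-valued analogue of Theorem~1 of Bayati and Montanari~\cite{Bayati2011}, and, like you, they note that none of these exists in the prior art and therefore leave the statement as a conjecture supported only by the factor-graph intuition that $f(\x_n|\q_n)$ converges to $f(\x_n|\{\y^{(j)}\}_{j=1}^J)$. Your identification of the missing Reeves--Pfister-type exactness result as the point where the argument stalls, and your conclusion that the statement must remain a conjecture, match the paper's reasoning exactly.
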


As the reader can see from the proof of Lemma~\ref{lemma:optSMV}, in order to rigorously prove Conjecture~\ref{conj:2}, we need to show ({\em i}) the MMSE given by the replica analysis for the MMV case~\cite{ZhuBaronKrzakala2017IEEE} is exact, ({\em ii}) the scalar channel noise variance $\Delta_v$ in Algorithm~\ref{algo:AMP_MMV} corresponds to the MMSE given by replica analysis, and ({\em iii}) a result similar to Theorem 1 in Bayati and Montanari~\cite{Bayati2011} holds. None of these three results exists in the prior art, so we do not foresee what exact conditions are needed for Conjecture~\ref{conj:2}. Proving these three results (and hence Conjecture~\ref{conj:2}) is beyond the scope of this work. Instead, we provide some intuition that explains why we believe Algorithm~\ref{algo:metric_opt_MMV} is optimal in the MMV scenario.

In the SMV case, $f(x_n|q_n)$ converges to the posterior $f(x_n|\y)$ in relaxed BP~\cite{RanganGAMP2011ISIT}. As an extension to the MMV case, we intuitively think that $f(\x_n|\q_n)$ would converge to the posterior $f(\x_n|\{\y^{(j)}\}_{j=1}^J)$,
based on two observations: ({\em i}) the measurement channels are i.i.d., so that it suffices to update the estimate of the channel by passing the messages $\mathcal{M}_{n\rightarrow m}(x_n^{(j)})$ and $\widehat{\mathcal{M}}_{m\rightarrow n}(x_n^{(j)})$ for different $j$ individually, and ({\em ii}) the super-symbols $\x_n$ are i.i.d., so that denoising each super-symbol individually accounts for all the information needed to update the estimate. 

The optimal estimate of each super-symbol $\x_n$ in the signal vectors is
\begin{equation}\label{eq:estimatorTrue}
\widetilde{\x}_{\text{true},n}=\arg\min_{\widehat{\x}_n} \int d_{\text{UD}}(\x_n,\widehat{\x}_n) f(\x_n|\{\y^{(j)}\}_{j=1}^J) d\x_n.
\end{equation}
Comparing~\eqref{eq:estimator} and~\eqref{eq:estimatorTrue}, provided that 
$f(\x_n|\q_n)$ converges to the posterior $f(\x_n|\{\y^{(j)}\}_{j=1}^J)$,
we have 
\begin{equation*}
\widetilde{\x}_n=\arg\min_{\widehat{\x}_n} \mathbb{E}[d_{\text{UD}}(\x_n,\widehat{\x}_n)|\q_n]\approx \widetilde{\x}_{\text{true},n},
\end{equation*}
which results in Conjecture~\ref{conj:2}.

\section{Example Metric-optimal Estimators and Performance Limits}\label{sec:example}\
In order to derive metric-optimal estimators, we need to know the scalar channel noise variance, $\Delta_v$. 
Below we obtain the optimal estimator, which is then used as part of Algorithm~\ref{algo:metric_opt_MMV} and to evaluate the performance limits of our metric-optimal algorithm.

For i.i.d. matrices and AWGN channels~\eqref{eq:AWGN}, replica analysis in our previous work with Krzakala~\cite{ZhuBaronKrzakala2017IEEE} yields the information-theoretic scalar channel noise variance $\Delta_v$ for message passing algorithms, which characterizes the posterior $f(\x_n|\q_n)$.\footnote{Our work with Krzakala~\cite{ZhuBaronKrzakala2017IEEE} focuses on a diagonal covariance matrix for $\x_n$ in~\eqref{eq:jsm}. A recent work by Hannak et al.~\cite{Hannak2017} extends our work~\cite{ZhuBaronKrzakala2017IEEE} to non-diagonal covariance matrices for $\x_n$. Following Hannak et al.~\cite{Hannak2017}, we can extend the performance limits analysis in this paper to non-diagonal covariance matrices for $\x_n$.} Hence, we can obtain the information-theoretic optimal performance with arbitrary additive error metrics. For other types of matrices,
our replica analysis~\cite{ZhuBaronKrzakala2017IEEE} does not hold. Nevertheless, MP algorithms still yield a posterior $f(\x_n|\q_n)$, which we conjecture converges to the true posterior $f(\x_n|\{\y^{(j)}\}_{j=1}^J)$ (Conjecture~\ref{conj:2}). Hence, we can still assume that the scalar channel noise variance $\Delta_v$ is known. Based on the known variance $\Delta_v$, we build metric-optimal estimators~\eqref{eq:estimator} for two examples, mean weighted support set error (Section~\ref{sec:weightedSupportSet}) and mean absolute error (Section~\ref{sec:MAE}).

\subsection{Mean weighted support set error (MWSE)}\label{sec:weightedSupportSet}
{\bf MWSE-optimal estimator:}
In support set estimation, the goal is to estimate the support of the signal, which is 1 if the corresponding entry in the signal is non-zero and 0 if it is zero. There are
two types of errors in support set estimation: false alarms (support is 0, but estimated to be 1) and misses (support is 1, but estimated to be 0). In some applications such as  medical imaging and radar detection, a miss may mean that the doctor misses an  illness of the patient, or the radar misses an incoming missile. Hence, the cost paid for a miss could be tremendous compared to a false alarm. There are other applications where a false alarm is more costly than a miss. For example, in  court, if an innocent person is mistakenly judged guilty, he/she will likely suffer a great deal. Therefore, we should weight these two errors differently in different applications. Let $b_n$ and $\widehat{b}_n$ be the true support and the estimated support of the $n$-th entry of the signal, respectively, and $\beta\in[0,1]$ is 
an application-dependent weight, which reflects the trade-off between the false alarms and misses. Hence, the MWSE given the pseudo data $\q_n$ is 
\begin{equation}\label{eq:MWSE}
\begin{split}
    &\text{MWSE}|\q_n=\mathbb{E}[d_{\text{WSE}}(b_n,\widehat{b}_n)|\q_n]=\\
    &\left\{
                \begin{array}{ll}
                 (1-\beta)\Pr(b_n=1|\q_n),&\   \widehat{b}_n=0\ \text{and}\ b_n=1,\\
                 \beta\Pr(b_n=0|\q_n),&\   \widehat{b}_n=1\ \text{and}\ b_n=0,\\
                 0,\quad\quad\quad&\ \widehat{b}_n=b_n,
                \end{array}
      		\right.\\
\end{split}
\end{equation}
where $\Pr(\cdot)$ denotes probability.
The optimal estimate $\widetilde{b}_n$ minimizes $\mathbb{E}[d_{\text{WSE}}(b_n,\widehat{b}_n)|\q_n]$~\eqref{eq:MWSE}, which implies
\begin{equation}\label{eq:weighted_optB}
    \widetilde{b}_n\!=\!\left\{
                \begin{array}{ll}
                 \!0,&\!   (1-\beta)\Pr(b_n=1|\q_n)\!\leq\! \beta\Pr(b_n=0|\q_n),\\
                 \!1,&\! (1-\beta)\Pr(b_n=1|\q_n)\!>\! \beta \Pr(b_n=0|\q_n).
                \end{array}
      		\right.\\
\end{equation}

Since $f(\q_n|b_n=0)=(2\pi\Delta_v)^{-\frac{J}{2}}\exp\l(-\frac{\q_n\q_n^T}{2\Delta_v}\r)$ and $f(\q_n|b_n=1)=[2\pi(\Delta_v+1)]^{-\frac{J}{2}}\exp\l[-\frac{\q_n\q_n^T}{2(\Delta_v+1)}\r]$, we have 
\begin{equation*}
\Pr(b_n=1|\q_n)=\frac{\frac{\rho}{[2\pi (1+\Delta_v)]^{\frac{J}{2}}}\e^{-\frac{\q_n\q_n^T}{2(1+\Delta_v)}}}{\frac{\rho}{[2\pi (1+\Delta_v)]^{\frac{J}{2}}}\e^{-\frac{\q_n\q_n^T}{2(1+\Delta_v)}}+\frac{(1-\rho)}{(2\pi \Delta_v)^{\frac{J}{2}}}\e^{-\frac{\q_n\q_n^T}{2\Delta_v}}},
\end{equation*}
and $\Pr(b_n=0|\q_n)=1-\Pr(b_n=1|\q_n)$.
Plugging $\Pr(b_n=1|\q_n)$ and $\Pr(b_n=0|\q_n)$ into~\eqref{eq:weighted_optB}, we have
\begin{equation}\label{eq:b_opt}
    \widetilde{b}_n=\left\{
                \begin{array}{ll}
                 0,&\quad   \q_n\q_n^T\leq\tau,\\
                 1,&\quad   \q_n\q_n^T>\tau,
                \end{array}
      		\right.\\
\end{equation}
where $\tau=2\Delta_v(1+\Delta_v)\ln\l[\frac{\beta(1-\rho)}{(1-\beta)\rho}\l(\frac{1+\Delta_v}{\Delta_v}\r)^{\frac{J}{2}}\r]$, and we remind the reader that $\rho$ is the sparsity rate.

{\bf Performance limits:}
Utilizing~\eqref{eq:b_opt} and taking expectation over the pseudo data $\q_n$ for $\mathbb{E}[d_{\text{WSE}}(b_n,\widehat{b}_n)|\q_n]$~\eqref{eq:MWSE}, we obtain the minimum MWSE (MMWSE),
\begin{equation}\label{eq:MMWSE_1}
\begin{split}
&\text{MMWSE}=\mathbb{E}[d_{\text{WSE}}(b_n,\widetilde{b}_n)]\\
&\quad =\int_{\q_n\q_n^T>\tau} \beta\Pr(b_n=0|\q_n)f(\q_n)d\q_n +\\
&\quad \quad \int_{\q_n\q_n^T\leq\tau} (1-\beta)\Pr(b_n=1|\q_n)f(\q_n)d\q_n.
\end{split}
\end{equation}
We have two integrals to simplify in~\eqref{eq:MMWSE_1}, where we show the first below, and the second can be obtained similarly. To derive the first integral, note that
\begin{equation}\label{eq:oneIntegral}
\small \int_{\q_n\q_n^T>\tau} \Pr(b_n\!=\!0|\q_n)f(\q_n)d\q_n\!=\!\Pr(\q_n\q_n^T\!>\!\tau, b_n\!=\!0).
\end{equation}
Next, we calculate the pdf of the random variable (RV) $g_n=\frac{\q_n\q_n^T}{\Delta_v}$ given $b_n=0$. Because the entries of $\q_n$ are i.i.d. $\mathcal{N}(0,\Delta_v)$ given $b_n=0$, $g_n$ follows the Chi-square distribution,
$f_G(g_n)=\frac{g_n^{\frac{J}{2}-1}\exp(-\frac{g_n}{2})}{2^{\frac{J}{2}}\Gamma(\frac{J}{2})}$,
where $\Gamma(\cdot)$ is the Gamma function.
Let $r_n=\Delta_v g_n=\q_n\q_n^T$, then we obtain
\begin{equation*}
f(r_n)=\frac{1}{\Delta_v}f_G(\frac{r_n}{\Delta_v})=\frac{r_n^{\frac{J}{2}-1}\exp\l[-\frac{r_n}{2\Delta_v}\r]}{(2\Delta_v)^{J/2}\Gamma(J/2)},
\end{equation*}
which helps to simplify~\eqref{eq:oneIntegral}. Therefore,~\eqref{eq:MMWSE_1} can be simplified,
\begin{equation}\label{eq:MMWSE_final}
\begin{split}
& \text{MMWSE}=\beta(1-\rho)\underbrace{\int_{r_n=\tau}^{\infty}\frac{r_n^{\frac{J}{2}-1}\exp\l[-\frac{r_n}{2\Delta_v}\r]}{(2\Delta_v)^{J/2}\Gamma(J/2)}dr_n}_{\Pr(\text{false alarm})}\\
& \quad +(1-\beta)\rho\underbrace{\int_{r_n=0}^{\tau}\frac{r_n^{\frac{J}{2}-1}\exp\l[-\frac{r_n}{2(1+\Delta_v)}\r]}{[2(1+\Delta_v)]^{J/2}\Gamma(J/2)}dr_n}_{\Pr(\text{miss})}.
\end{split}
\end{equation}

{\bf Hamming distance}: In digital wireless communication systems, the signal only takes discrete values. A useful error metric is the (per-entry) Hamming distance~\cite{Cover06}, which equals 1 if the estimate of an entry of the signal differs from the true value. (Section~\ref{sec:app} will present an example in wireless communication that minimizes the Hamming distance.)
The reader can verify that Hamming distance can be interpreted as a special case of  weighted support set error, where $\beta=0.5$ provides equal weight to both errors~\eqref{eq:MWSE}. That said, we provide more insights about this particular case, which is ubiquitous in communication systems.

For the jointly sparse model in~\eqref{eq:jsm}, we define the Hamming distance as
\begin{equation}\label{eq:Hamming}
d_{\text{HD}}(\x_n,\widehat{\x}_n)=\mathbbm{1}_{\x_n\neq \widehat{\x}_n},
\end{equation}
where $\mathbbm{1}_{\mathcal{A}}$ is the indicator function.
If ({\em i}) the pdf $\phi(\x_n)$ in~\eqref{eq:jsm} is a $J$-dimensional Dirac-delta function $\delta(\x_n-\mathbf{1})$, where $\mathbf{1}$ is an all-one row vector, and ({\em ii}) the estimate satisfies $\widehat{x}_n^{\{1\}}=\cdots =\widehat{x}_n^{\{J\}}, \forall n\in \{1,\ldots,N\}$, then the weighted support set error~\eqref{eq:MWSE} with weight $\beta=0.5$ is equal to half of the Hamming distance~\eqref{eq:Hamming} for super symbols.
In~\eqref{eq:MHD}--\eqref{eq:MHD_optTh}, we briefly derive the Hamming distance-optimal estimator when $\x_n\in \{\mathbf{1},\mathbf{0}\}$, where $\mathbf{0}$ is an all-zero row vector. The mean Hamming distance (MHD) given the pseudo data $\q_n$ is
\begin{equation}\label{eq:MHD}
\begin{split}
    &\text{MHD}|\q_n=\mathbb{E}[d_{\text{MHD}}(\x_n,\widehat{\x}_n)|\q_n]=\\
    &\left\{
                \begin{array}{ll}
                 \Pr(\x_n=\mathbf{1}|\q_n),&\   \widehat{\x}_n=\mathbf{0}\ \text{and}\ \x_n=\mathbf{1},\\
                 \Pr(\x_n=\mathbf{0}|\q_n),&\   \widehat{\x}_n=\mathbf{1}\ \text{and}\ \x_n=\mathbf{0},\\
                 0,\quad\quad\quad&\ \widehat{\x}_n= \x_n.
                \end{array}
      		\right.\\
\end{split}
\end{equation}
Following the steps in~\eqref{eq:weighted_optB}--\eqref{eq:b_opt}, the minimum MHD (MMHD) estimator is
\begin{equation}\label{eq:MHD_optTh}
\widetilde{\x}_n=\left\{
                \begin{array}{ll}
                 \mathbf{1},&   \sum_{j=1}^J q_n^{(j)} \geq\frac{J}{2}+\Delta_v \ln\l[\frac{1-\rho}{\rho}\r],\\
                 \mathbf{0},&   \sum_{j=1}^J q_n^{(j)} <\frac{J}{2}+\Delta_v \ln\l[\frac{1-\rho}{\rho}\r].
                \end{array}
      		\right.\\
\end{equation}


\subsection{Mean absolute error (MAE)}\label{sec:MAE}
{\bf MAE-optimal estimator:} The element-wise absolute error (AE) is
\begin{equation}\label{eq:AE}
  d_{\text{AE}}(x_n^{(j)},\widehat{x}_n^{(j)})=|x_n^{(j)}-\widehat{x}_n^{(j)}|.
\end{equation}
In order to find the minimum mean absolute error (MMAE) estimate, $\widetilde{\x}_n$, we need to find the stationary point of~\eqref{eq:AE},
\begin{equation}\label{eq:optimalCondition}
\frac{d\mathbb{E}[|x_n^{(j)}-\widehat{x}_n^{(j)}|\ | \q_n]}{d\widehat{x}_n^{(j)}}\bigg|_{\widehat{x}_n^{(j)}=\widetilde{x}_n^{(j)}}=0,
\end{equation}
$\forall j\in\{1,\ldots,J\},n\in\{1,\ldots,N\}$.
It can be proved that $\mathbb{E}[X]=\int_{0}^{\infty} \Pr(X>x)dx$, if $X\geq 0$. Therefore,
\begin{equation}\label{eq:expectation}
\begin{split}
  &\mathbb{E}[|x_n^{(j)}-\widehat{x}_n^{(j)}|\ | \q_n]=\int_{0}^{\infty} \Pr(|x_n^{(j)}-\widehat{x}_n^{(j)}|>t| \q_n)dt\\
  &=\int_{-\infty}^{\widehat{x}_n^{(j)}} \Pr(x_n^{(j)}<t| \q_n)dt+\int_{\widehat{x}_n^{(j)}}^{\infty} \Pr(x_n^{(j)}>t| \q_n)dt.
\end{split}
\end{equation}
Using~\eqref{eq:optimalCondition} and~\eqref{eq:expectation}, we obtain
\begin{equation*}
  \Pr(x_n^{(j)}<\widetilde{x}_n^{(j)} | \q_n)=\Pr(x_n^{(j)}>\widetilde{x}_n^{(j)}| \q_n).
\end{equation*}
That is,
\begin{equation*}\label{eq:optimalEst}
  \int_{-\infty}^{\widetilde{x}_n^{(j)}} f(x_n^{(j)} | \q_n)dx_n^{(j)}=\int_{\widetilde{x}_n^{(j)}}^{\infty} f(x_n^{(j)} | \q_n)dx_n^{(j)}=\frac{1}{2},
\end{equation*}
through which we solve for the optimal estimator $\widetilde{x}_n^{(j)}$ numerically. 

{\bf Performance limits:} We calculate the MMAE as follows,
\begin{equation}\label{eq:MMAE}
\begin{split}
  \text{MMAE}\!&=\!\mathbb{E}[|\widetilde{x}_n^{(j)}-x_n^{(j)}|]\!=\!\int_{-\infty}^{\infty}\! \mathbb{E}[|\widetilde{x}_n^{(j)}-x_n^{(j)}|\ | \q_n] f(\q_n)d \q_n\\
  &=\int_{-\infty}^{\infty}\Bigg[\int_{-\infty}^{\widetilde{x}_n^{(j)}} -x_n^{(j)} f(x_n^{(j)}| \q_n)d x_n^{(j)}+\\
  &\quad\quad\quad\quad \int_{\widetilde{x}_n^{(j)}}^{\infty} x_n^{(j)}f(x_n^{(j)}| \q_n)d x_n^{(j)}\Bigg] f(\q_n)d \q_n,
\end{split}
\end{equation}
which has to be numerically approximated.

\begin{figure}[t]
\centering
\includegraphics[width=8cm]{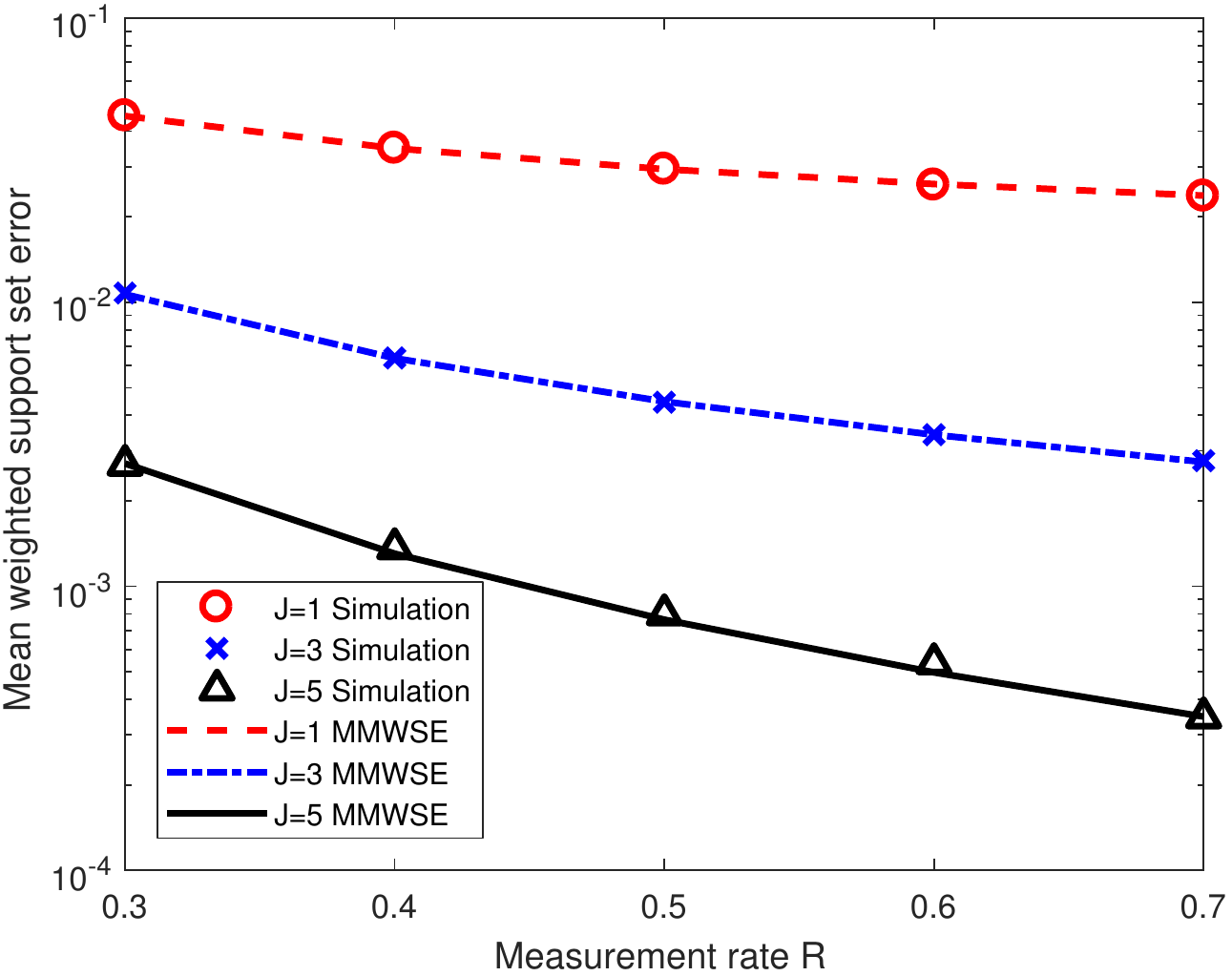}
\caption{Comparison of simulation results to theoretic MMWSE for weighted support set estimation under different number of channels $J$ and measurement rates $R$ ($\beta=0.2$, noise variances $\Delta_z=0.01$).}
\label{fig:SimWSE}
\end{figure}

\section{Synthetic Simulations}\label{sec:numeric_synth}
After deriving the minimum mean weighted support set error (MMWSE) and minimum mean absolute error (MMAE) estimators, this section provides numerical results for Algorithm~\ref{algo:metric_opt_MMV}.
In the case of i.i.d. random matrices and  AWGN channels~\eqref{eq:AWGN}, replica analysis yields the MMSE of the MMV problem~\cite{ZhuBaronKrzakala2017IEEE}. By inverting the MMSE (details in Appendix~\ref{app:inverMMSE}), we obtain the scalar channel noise variance $\Delta_v$,\footnote{Algorithm~\ref{algo:metric_opt_MMV} also applies to problems with other types of matrices, as long as 
the entries in the measurement matrices scale with $\frac{1}{\sqrt{N}}$. However, when the matrices are not i.i.d., there is no easy way to {\em theoretically} characterize the MMSE, the equivalent scalar channel noise variance $\Delta_v$, and the  metric-optimal error. Such a theoretic characterization is sometimes necessary, because the MMSE behaves differently under different noise variances $\Delta_z$~\eqref{eq:AWGN} and measurement rates $R$~\cite{ZhuBaronCISS2013,ZhuBaronKrzakala2017IEEE}.} which characterizes the posterior $f(\x_n|\q_n)$. Given $\Delta_v$, we characterize the MMWSE and MMAE theoretically. In the following simulations, we use i.i.d. Gaussian matrices with unit-norm rows, i.i.d. $J$-dimensional Bernoulli-Gaussian signals~\eqref{eq:jsm} with $J=1,3$, and 5, and sparsity rate $\rho=0.1$. The signal length is $N=10000$, and the measurement rate $R=\frac{M}{N}$ varies from $0.3$ to $0.7$.
For each setting, the simulation results are averaged over 50 realizations of the problem.

\textbf{Mean weighted support set error in AWGN channels:} We simulate AWGN channels~\eqref{eq:AWGN} in this case with the noise variance being $\Delta_z\in \{0.01,0.001\}$. Fig.~\ref{fig:SimWSE} shows the weighted support set estimation results using our metric-optimal algorithm compared to the MMWSE~\eqref{eq:MMWSE_final}. The red dashed curve, the blue dashed-dotted curve, and the black solid curve correspond to the MMWSE of $J=1,\ 3$, and 5, respectively. The red circles, blue crosses, and black triangles represent the simulation results. We can see that our simulation results match the theoretically optimal performance.

\textbf{Remark:} The optimal weighted support set estimator~\eqref{eq:b_opt} is not Lipschitz continuous. Hence, for $J=1$,~\eqref{eq:b_opt} is not guaranteed to yield the MMWSE, according to Lemma~\ref{lemma:optSMV}. Nevertheless, numerical results for $J=1$ show that the MWSE given by~\eqref{eq:b_opt} is close to the MMWSE.

For  weighted support set estimation, we further study the information-theoretic optimal receiver operating characteristic (ROC) curves, which are plotted in Fig.~\ref{fig:ROC} for different $J$'s. The red curves and black curves are 
plotted for noise variances $\Delta_z=0.001$ and $0.01$, respectively. The solid curves, the dashed curves, and the dotted curves represent $J=1,\ 3$, and 5, respectively. The true positive rate (TPR) and false positive rate (FPR) in both axes are defined as TPR$=\frac{\text{\# accurately predicted positives}}{\text{\# all positives in truth}}$ and FPR$=\frac{\text{\# wrongly predicted negatives}}{\text{\# all negatives in truth}}$. We can see that having more signal vectors $J$ leads to a larger area under the ROC curve, which indicates better trade-offs between true positives and false positives.
\begin{figure}[t]
\centering
\includegraphics[width=8cm]{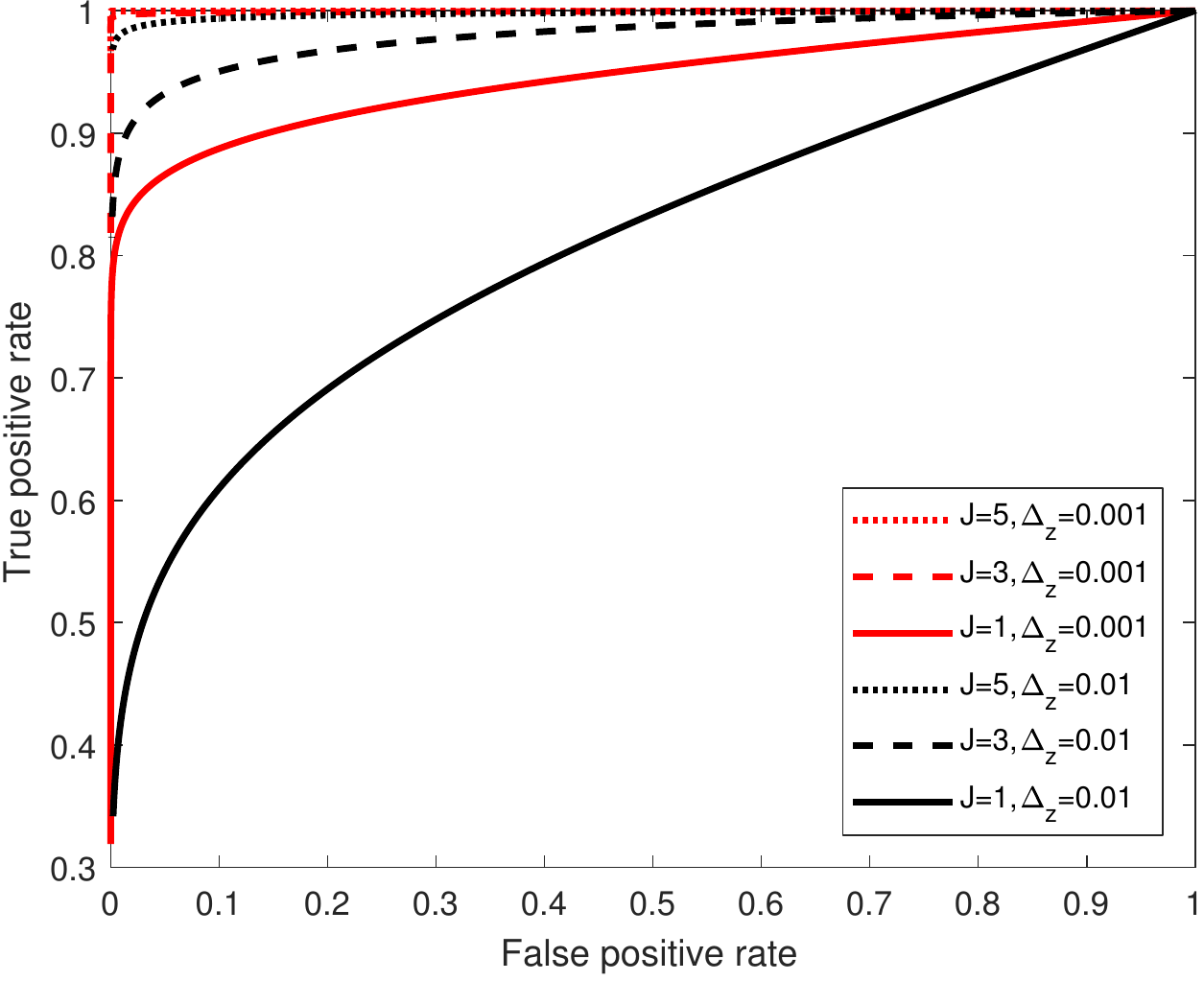}
\caption{Receiver operating characteristic curves for weighted support set estimation under different channel noise variances $\Delta_z$ and number of channels $J$ ($\beta=0.2$, measurement rate $R=0.3$).}
\label{fig:ROC}
\end{figure}

\textbf{Mean absolute error in logistic channels:} We simulate i.i.d. logistic channels~\eqref{eq:logit} with parameters $a=10$ and $30$; the smaller $a$ is, the noisier the channel becomes. Fig.~\ref{fig:MAEvsMMAE} plots the  simulated MAE (crosses) and the theoretic MMAE~\eqref{eq:MMAE} (curves) for various settings.\footnote{We do not have a replica analysis for logistic channels. In order to compute the theoretic MMAE, we use the average $\Delta_v$ from all the 50 simulations for each setting and calculate the MMAE with~\eqref{eq:MMAE}.} Different colors and line shapes refer to different $a$'s and $J$'s, respectively. We can see that our simulation results match the theoretically optimal performance.

\begin{figure}[t]
\centering
\includegraphics[width=8cm]{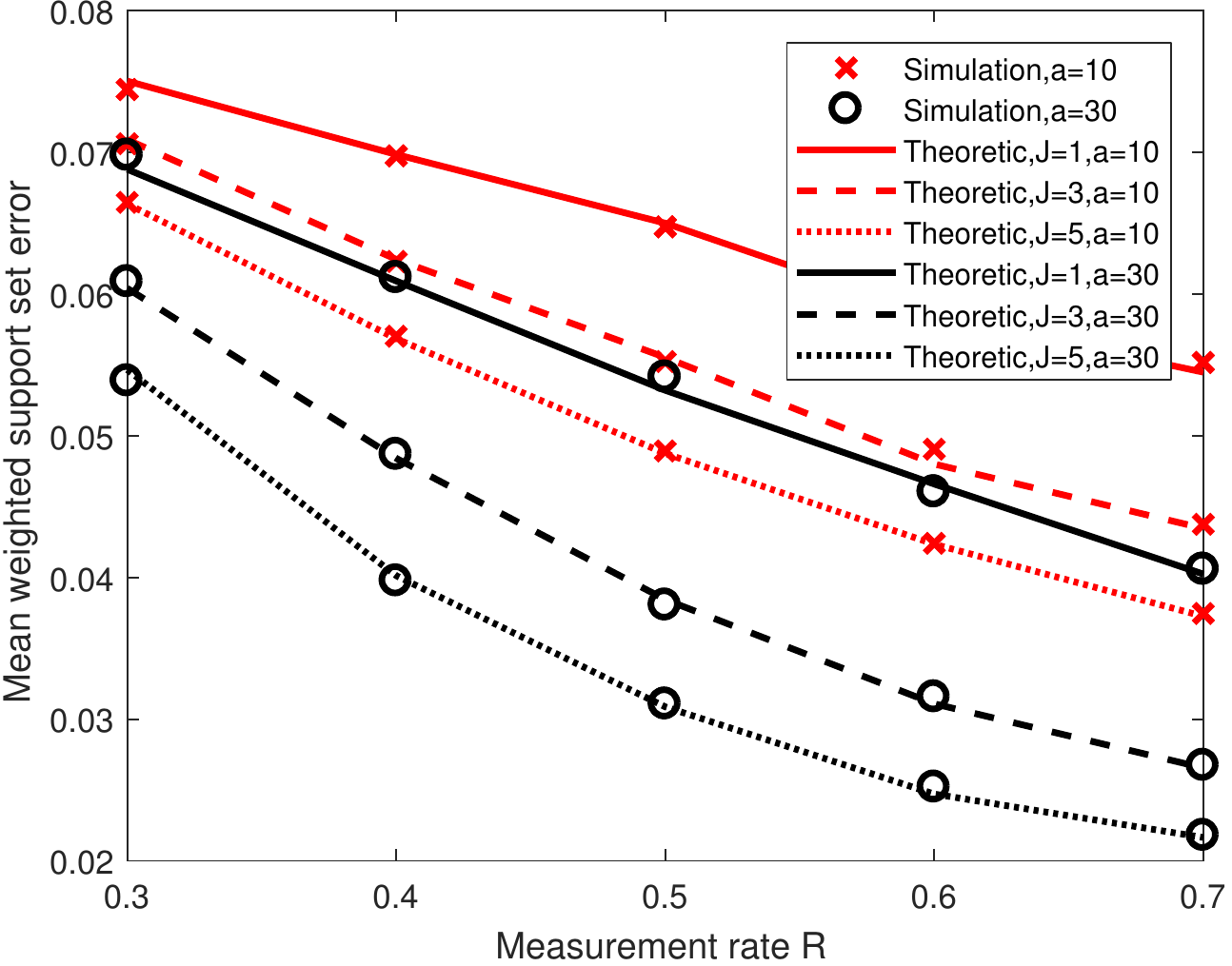}
\caption{Comparison of simulation results to theoretic MMAE  under different logistic channels~\eqref{eq:logit}, number of channels $J$, and measurement rates $R$.}
\label{fig:MAEvsMMAE}
\end{figure}

\textbf{Remark:} Both simulations yield better performance for larger $J$. This is intuitive, because more signal vectors that share the same support should make the estimation process easier due to more information being available.

\begin{figure}[t]
\centering
    \includegraphics[width=8.5cm]{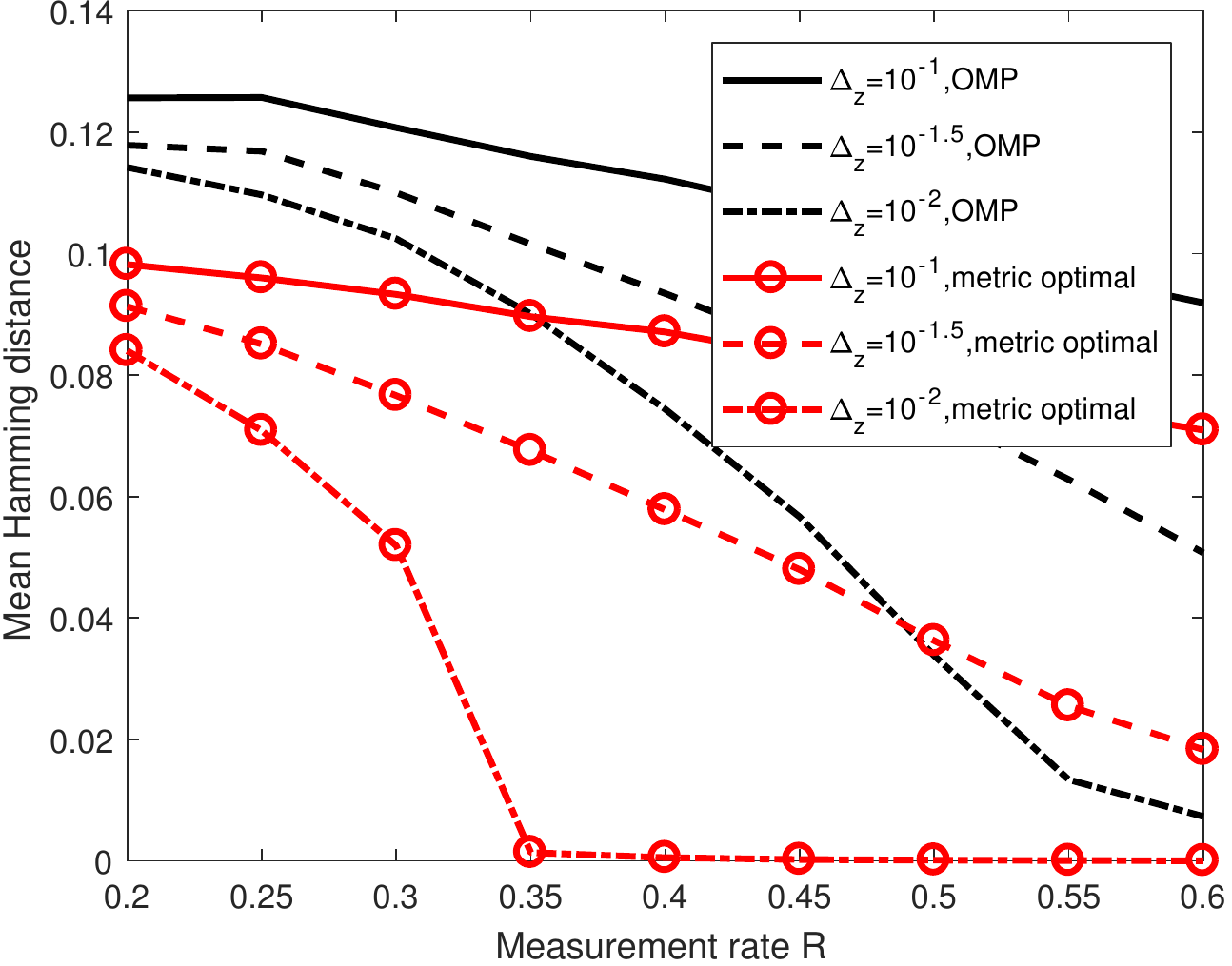}
\caption{Simulation results of OMP and Algorithm~\ref{algo:metric_opt_MMV} for active user detection in multi-user communication (SMV).}\label{fig:OMP}
\end{figure}

\section{Application}\label{sec:app}
In this section, we discuss active user detection (AUD) in a multi-user communication setting that can be viewed as compressed sensing~\cite{DonohoCS,CandesRUP,BaraniukCS2007} (CS, which is closely related to SMV) in some scenarios.  
Next, we simulate AUD using our metric-optimal algorithm. Finally, we discuss how to solve the AUD problem using an MMV setting.

{\bf CS based active user detection:}
One application of MMV is AUD in a massive random access (MRA) scenario for multi-user communication~\cite{FletcherRanganGoyal2009,Boljanovic2017}. 
In the MRA scenario, multiple end users (EUs) are requesting access to the network simultaneously by sending their unique identification codewords, $\a_n\in \{-1,+1\}^{M\times 1},\ n\in \{1,\ldots,N\}$, to the base station, where $n$ denotes the user id, and each user's codeword is known by the base station. The base station needs to determine which EUs are requesting access to the network (active) and which are not (inactive), so that it can allocate resources to the active EUs.
Fletcher et al.~\cite{FletcherRanganGoyal2009} proposed a CS based AUD scheme for MRA, which was recently revisited by  Boljanovi{\'c} et al.~\cite{Boljanovic2017}. In their setup~\cite{FletcherRanganGoyal2009,Boljanovic2017}, all EUs are synchronized, i.e., all active EUs send each entry of their codewords to the base station simultaneously in one time slot. Denote the status of the $n$-th EU by $x_n\in \{0,1\}$, where $x_n=1$ means active and $x_n=0$ is inactive. Denote the received signal vector at the base station by $\y\in \mathbb{R}^{M\times 1}$. Since all EUs are synchronized, we can express the received signal $\y$ by
\begin{equation}\label{eq:MRA_SMV}
\y=\A\x+\z,
\end{equation} 
where $\A=[\a_1,\ldots,\a_N]\in \{0,1\}^{M\times N}$, and $\z$ is AWGN.
The base station estimates $\x$ to determine which EUs are active. 

In the following, we first apply a mean Hamming distance (MHD) optimal algorithm to estimate $\x$ from~\eqref{eq:MRA_SMV}, and compare to the algorithm of Boljanovi{\'c} et al.~\cite{Boljanovic2017}, which is orthogonal match pursuit (OMP)~\cite{Pati1993}. Next, we propose an MMV based scheme for AUD in MRA.

{\bf Simulation with MHD-optimal algorithm:}
As the reader can see, the CS based active user detection~\cite{FletcherRanganGoyal2009,Boljanovic2017} is an SMV problem~\eqref{eq:SMV}, which is MMV for $J=1$~\eqref{eq:linearMixing}. Later in this section we will extend the scheme by Boljanovi{\'c} et al. to an MMV setting, and so we keep using MMV notations. Note that the entries of the measurement matrix in this AUD problem take values of $\pm 1$. Because the derivation of Algorithm~\ref{algo:AMP_MMV} assumes that entries of $\A^{(j)}$ scale with $\frac{1}{\sqrt{N}}$,\footnote{Details can be found in Krzakala et al.~\cite{Krzakala2012probabilistic} and Barbier and Krzakala~\cite{BarbierKrzakala2017IT}.} 
we scale $\A^{(j)}$ down by $\sqrt{N}$ using a modified $\widetilde{\y}^{(j)}=\frac{\y^{(j)}}{\sqrt{N}}$~\eqref{eq:SMV}. 

Following the discussion above, we simulate the settings of measurement rate $R\in \{0.2,0.25,\ldots,0.6\}$ and noise variance $\Delta_z\in \{10^{-1},10^{-1.5},10^{-2}\}$. For each setting, we randomly generate 50 realizations of the Bernoulli signal $\x^{(1)}\in \{0,1\}^{N\times 1}$ with  sparsity rate $\rho=0.1$, and measurement matrix $\A^{(1)}\in \l\{-\frac{1}{\sqrt{N}},+\frac{1}{\sqrt{N}}\r\}^{M\times N}$, where $N=10000$. We run Algorithm~\ref{algo:metric_opt_MMV} with $J=1$ to estimate the underlying signal $\x^{(1)}$. Note that $f_{a_n}(\Delta_v,\q_n)$ and $f_{v_n}(\Delta_v,\q_n)$ in Lines~\ref{line:mean}--\ref{line:var} of Algorithm~\ref{algo:AMP_MMV}, which consists of Part 2 of Algorithm~\ref{algo:metric_opt_MMV}, are given by the following,
\begin{equation*}
\begin{split}
f_{a_n}(\Delta_v,\q_n)&=\frac{\rho}{\rho+(1-\rho)\exp\l[-\frac{\sum_{j=1}^J q_n^{(j)} -\frac{J}{2}}{\Delta_v}\r]}\mathbf{1},\\
f_{v_n}(\Delta_v,\q_n)&=f_{a_n}(\Delta_v,\q_n)-f_{a_n}(\Delta_v,\q_n)^2,
\end{split}
\end{equation*}
where the power-of-two in the last term of $f_{v_n}(\cdot,\cdot)$ is applied element-wise.

Our results are compared to OMP in Fig.~\ref{fig:OMP}. The solid, dashed, and dash-dotted curves represent noise variance $\Delta_z=10^{-1},\ 10^{-1.5},\ 10^{-2}$, respectively. The black curves are the OMP results and the red curves with circle markers are the results of Algorithm~\ref{algo:metric_opt_MMV} when optimizing for MHD. We can see that our algorithm consistently outperforms OMP.\footnote{Note that the entries of the signal estimated by OMP are not exactly 0's and 1's. Hence, in order to provide meaningful results, we threshold the OMP estimates before calculating the Hamming distance.}

{\bf MMV scheme for active user detection:}
Reminiscing on Section~\ref{sec:numeric_synth} and our previous work with Krzakala on MMV~\cite{ZhuBaronKrzakala2017IEEE}, more measurement vectors (larger $J$) lead to better estimation quality. We propose to convert the SMV style of the AUD problem into an MMV style by having each EU send $J$ different identification codewords, $\a_n^{(j)}\in \{-1,+1\}^{M\times 1}, \forall j\in \{1,\ldots,J\}$, to the base station. 
However, since the underlying signal $\x$~\eqref{eq:MRA_SMV} is Bernoulli and does not change during the AUD period, the resulting MMV is equivalent to an SMV with $J$ times more measurements; the column $\a_n$ of the measurement matrix in the equivalent SMV scheme is $\a_n = \l[\l(\a_n^{(1)}\r)^T,\cdots, \l(\a_n^{(J)}\r)^T\r]^T$. Hence, Algorithm~\ref{algo:metric_opt_MMV} should yield the same detection accuracy for the MMV scheme with $J$ channels  and the SMV scheme with $J$ times larger measurement rate. Nevertheless, there is one advantage in adopting the MMV scheme: Lines~\ref{line:beginForLoop}--\ref{line:endForLoop} of Algorithm~\ref{algo:AMP_MMV}\footnote{Recall that Algorithm~\ref{algo:metric_opt_MMV} runs Algorithm~\ref{algo:AMP_MMV} as a subroutine.} can be parallelized with $J$ processing units (for example using a general purpose graphics processing unit or multicore computing system). After parallelizing Algorithm~\ref{algo:AMP_MMV}, the base station can perform the detection procedure with less runtime.

\begin{figure*}[bt]
\begin{equation}\label{eq:Ez}
\mathbb{E}[w|k,y,\Theta]=\left\{
                \begin{array}{ll}
                \displaystyle \frac{1}{\widetilde{Z}}\int  \frac{w}{1+\operatorname{e}^{-aw}} \frac{\operatorname{e}^{-\frac{1}{2\Theta}(w-k)^2}}{\sqrt{2\pi\Theta}}  dz=\frac{\sum_{u=1}^{u_{\text{max}}} \alpha_u T_1(u)}{\sum_{u=1}^{u_{\text{max}}} \alpha_u T_0(u)}=k+\frac{\Theta\sum_{u=1}^{u_{\text{max}}} \frac{\alpha_u \phi(\eta_u)}{\sqrt{\l(\frac{\sigma_u}{a}\r)^2+\Theta}}}{\sum_{u=1}^{u_{\text{max}}} \alpha_u \Phi(\eta_u)},\ &y=1,\\
               \displaystyle  \frac{1}{\widetilde{Z}}\int \frac{w\operatorname{e}^{-aw}}{1+\operatorname{e}^{-aw}} \frac{\operatorname{e}^{-\frac{1}{2\Theta}(w-k)^2}}{\sqrt{2\pi\Theta}}  dz=\frac{k-\sum_{u=1}^{u_{\text{max}}} \alpha_u T_1(u)}{1-\sum_{u=1}^{u_{\text{max}}} \alpha_u T_0(u)}=k-\frac{\Theta\sum_{u=1}^{u_{\text{max}}} \frac{\alpha_u \phi(\eta_u)}{\sqrt{\l(\frac{\sigma_u}{a}\r)^2+\Theta}}}{1-\sum_{u=1}^{u_{\text{max}}} \alpha_u \Phi(\eta_u)},\ &y=0.
                \end{array}
      		\right.\\
\end{equation}
\end{figure*}

\begin{figure*}
\begin{equation}\label{eq:mmse_p1}
\begin{split}
\mathbb{E}[(\mathbb{E}[\x_n|\q_n])^2]&=\int_{\q_n} f(\q_n) (\mathbb{E}[\x_n|\q_n])^2 d\q_n\\
&=\frac{\l(\frac{\rho}{1+\Delta_v}\r)^2}{\l[2\pi (1+\Delta_v)\r]^{J/2}}\int_{\q_n} \frac{\q_n \q_n^T}{\rho \exp\l[\frac{\q_n\q_n^T}{2(1+\Delta_v)}\r]+(1-\rho)\l(1+\frac{1}{\Delta_v}\r)^{J/2} \exp\l[\frac{\q_n\q_n^T (\Delta_v-1)}{2\Delta_v(1+\Delta_v)}\r]}  d\q_n.
\end{split}
\end{equation}
\end{figure*}

\section{Conclusion}\label{sec:conclusion}
In this paper, we studied the MMV signal estimation problem with user-defined additive error metrics on the estimate. We proposed an  algorithmic framework that is optimal under  arbitrary additive error metrics. We showed the  optimality of our algorithm under certain conditions for SMV and conjectured its optimality for MMV. As examples, we derived algorithms that yield the optimal estimates in the sense of mean weighted support set error and mean absolute error, respectively. Numerical results not only verified the theoretic performance but also verified the intuition that having more signal vectors in MMV problems is beneficial to the estimation algorithm.
We further provided simulation results for active user detection problem in multi-user communication systems, which is a real-world application of MMV  models with the goal of minimizing the Hamming distance. Simulation results demonstrated the promise of our algorithm.

\appendix

\subsection{Derivation of $g_{\text{out}}$ for logistic channels~\eqref{eq:logit}}\label{app:logit}
Byrne and Schniter~\cite{ByrneSchniter2015ArXiv} provide a method to derive $g_{\text{out}}$ for logistic channels~\eqref{eq:logit}, but the actual formula for $g_{\text{out}}$ is not given in their paper. To make our paper self-contained, we outline the derivation of $g_{\text{out}}$ for logistic channels. 
In order to calculate $g_{out}(k,y,\Theta)$~\eqref{eq:g_out}, we need
to find $f(w|k,y,\Theta)$~\eqref{eq:prob_w} and calculate $\mathbb{E}[w|k,y,\Theta]$. For logistic channels~\eqref{eq:logit}, 
\begin{equation}\label{eq:logit_cond}
\begin{split}
& f(w|k,y,\Theta)=\\
&\frac{1}{\widetilde{Z}}\times \l[\frac{\delta(y-1)}{1+\operatorname{e}^{-w}}+\delta(y)\frac{\operatorname{e}^{-w}}{1+\operatorname{e}^{-w}}\r]\frac{1}{\sqrt{2\pi\Theta}} \operatorname{e}^{-\frac{1}{2\Theta}(w-k)^2},
\end{split}
\end{equation}
where $\widetilde{Z}$ is a normalization factor. Therefore, it is
difficult to calculate $\mathbb{E}[w|k,y,\Theta]$. Instead of calculating $\mathbb{E}[w|k,y,\Theta]$ by brute force, Byrne and Schniter~\cite{ByrneSchniter2015ArXiv}  use a mixture of Guassian cumulative distribution functions (CDF's) to approximate the sigmoid function $\frac{1}{1+\text{exp}(-aw)} \approx\sum_{u=1}^{u_{\text{max}}} \alpha_u \Phi(\frac{w}{\sigma_u/a})$~\cite{Stefanski1991}, where $u_{\text{max}}$ is the maximum number of Gaussian CDF's one wants to use, $\Phi(\frac{w}{\sigma_u/a})$ denotes the Gaussian CDF whose standard deviation is $\frac{\sigma_u}{a}$, and $\alpha_u$ is the weight.

Following Byrne and Schniter~\cite{ByrneSchniter2015ArXiv}, we define the $i$-th moment
\begin{equation*}
\int w^i \mathcal{N}(w;k,\Theta) \Phi\l(\frac{w}{\sigma_u/a}\r)dz=T_i(u),
\end{equation*}
where $\mathcal{N}(w;k,\Theta)$ is the pdf of an RV $w$ with mean $k$ and variance $\Theta$, and $\Phi(\cdot)$ is the CDF of a standard Gaussian RV.
Defining $\eta_u=\frac{k}{\sqrt{\l(\frac{\sigma_u}{a}\r)^2+\Theta}}$, we obtain
\begin{eqnarray}
T_0(u)&=&\Phi(\eta_u),\nonumber\\
T_1(u)&= &k\Phi(\eta_u)+\frac{\Theta\phi(\eta_u)}{\sqrt{\l(\frac{\sigma_u}{a}\r)^2+\Theta}},\nonumber
\end{eqnarray}
\begin{eqnarray}
T_2(u)=\frac{(T_1(u))^2}{\Phi(\eta_u)}+\Theta \Phi(\eta_u)-\frac{\Theta^2\phi(\eta_u)}{\l(\frac{\sigma_u}{a}\r)^2+\Theta}\l(\eta_u+\frac{\phi(\eta_u)}{\Phi(\eta_u)}\r),\nonumber
\end{eqnarray}
where $\phi(\eta_u)$ is the pdf of a standard Gaussian RV at $\eta_u$.
Hence, the normalization factor $\widetilde{Z}$ in~\eqref{eq:logit_cond} can be derived,
\begin{equation*}
\begin{split}
\widetilde{Z}=& \int \l[\frac{\delta(y-1)}{1+\operatorname{e}^{-aw}}+\delta(y)\frac{\operatorname{e}^{-aw}}{1+\operatorname{e}^{-aw}}\r]\frac{\operatorname{e}^{-\frac{1}{2\Theta}(w-k)^2}}{\sqrt{2\pi\Theta}}  dz\\
 = &\left\{
                \begin{array}{ll}
                 \sum_{u=1}^{u_{\text{max}}} \alpha_u \Phi(\eta_u),\ &y=1,\\
                 1-\sum_{u=1}^{u_{\text{max}}} \alpha_u \Phi(\eta_u),\ &y=0.
                \end{array}
      		\right.\\
\end{split}
\end{equation*}
We can further obtain the expression for $\mathbb{E}[w|k,y,\Theta]$ in~\eqref{eq:Ez}.
Hence, we can calculate $g_{\text{out}}$~\eqref{eq:g_out}.

Apart from $g_{\text{out}}$, we also need to find the partial derivative of $g_{\text{out}}$~\eqref{eq:deri_g_out}, which 
according to Rangan~\cite{RanganGAMP2011ISIT} satisfies
\begin{equation}\label{eq:deri_g_out}
-\frac{\partial}{\partial k}g_{\text{out}}(k,y,\Theta)=\frac{1}{\Theta}\l(1-\frac{\text{var}(w|k,y,\Theta)}{\Theta}\r),
\end{equation}
where $\text{var}(w|k,y,\Theta)=\mathbb{E}[w^2|k,y,\Theta]-[\mathbb{E}[w|k,y,\Theta]]^2$. Note that $\mathbb{E}[w^2|k,y,\Theta]$ can be derived in the same way as~\eqref{eq:Ez} and the result is given below,
\begin{equation*}
\mathbb{E}[w^2|k,y,\Theta]=\left\{
                \begin{array}{ll}
                  \displaystyle k^2+\Theta+\frac{ \displaystyle\sum_{u=1}^{u_{\text{max}}} \alpha_u \xi_u}{\displaystyle\sum_{u=1}^{u_{\text{max}}} \alpha_u \Phi(\eta_u)},\ &y=1,\\
                 \displaystyle k^2+\Theta-\frac{\displaystyle\sum_{u=1}^{u_{\text{max}}} \alpha_u \xi_u}{\displaystyle 1-\sum_{u=1}^{u_{\text{max}}} \alpha_u \Phi(\eta_u)},\ &y=0,
                \end{array}
      		\right.\\
\end{equation*}
where
\begin{equation*}
\xi_u=\frac{2k\Theta\phi(\eta_u)}{\sqrt{\l(\frac{\sigma_u}{a}\r)^2+\Theta}}-\frac{\Theta^2 \eta_u\phi(\eta_u)}{\l(\frac{\sigma_u}{a}\r)^2+\Theta}.
\end{equation*}

\subsection{Inverting the MMSE}\label{app:inverMMSE}
For the MMV problem with i.i.d. matrices and joint Bernoulli-Gaussian signals, Zhu et al. provide an information-theoretic characterization of the MMSE by using replica analysis~\cite{ZhuBaronKrzakala2017IEEE}. Suppose that we have already obtained the MMSE for an MMV problem. This appendix briefly shows 
how to invert the MMSE expression in order to obtain the equivalent scalar channel noise variance $\Delta_v$.

The optimal denoiser for the pseudo data is
$\widetilde{\x}_n=\mathbb{E}[\x_n|\q_n]=f_{a_n}(\Delta_v,\q_n)$,
where $f_{a_n}(\Delta_v,\q_n)$ is given in~\eqref{eq:denoiser}. We then express MMSE expression using $\mathbb{E}[\x_n|\q_n]$ as follows,
\begin{equation}
\text{MMSE}=\mathbb{E}[(\widetilde{\x}_n-\x_n)^2]=J\rho-\mathbb{E}[(\mathbb{E}[\x_n|\q_n])^2].\label{eq:mmse1}
\end{equation}
We calculate $\mathbb{E}[(\mathbb{E}[\x_n|\q_n])^2]$ in~\eqref{eq:mmse_p1},
where the $J$-dimensional integral can be simplified by a change of coordinates. Then, we plug~\eqref{eq:mmse_p1} into~\eqref{eq:mmse1}, and express the MMSE as a function of $\Delta_v$. Finally, we numerically solve $\Delta_v$ for any given MMSE.

\section*{Acknowledgments}
We would like to thank Jin Tan for providing valuable insights into achieving metric-optimal performance in signal estimation. Jong Chul Ye, Yavuz Yapici,
and Ismail Guvenc helped us identify some real-world applications for minimizing error metrics that are different from the MSE. Finally, we are very grateful to the reviewers and Associate Editor Prof. Lops. In addition to their excellent suggestions, they were unusually flexible with us during the review process.


\end{document}